\documentclass[11pt]{article}
\usepackage{natbib}
\usepackage{graphicx}
\usepackage{multirow}
\usepackage{amsmath,amssymb,amsfonts}
\usepackage{amsthm}
\usepackage{mathrsfs}
\usepackage[title]{appendix}
\usepackage{xcolor}
\usepackage{textcomp}
\usepackage{manyfoot}
\usepackage{booktabs}
\usepackage{algorithm}
\usepackage{algpseudocode}
\usepackage{listings}
\usepackage{a4wide}
\usepackage{bbm}
\usepackage{bm}
\usepackage{subcaption}
\usepackage{authblk}   

\newcommand{\E}{\mathbb{E}}
\newcommand{\e}{\mathrm{e}}
\newcommand{\Var}{\mathrm{Var}}

\newcommand{\D}{\mathrm{d}}
\newcommand{\dmsig}[1]{\makebox[1.25em][l]{\textsuperscript{#1}}}
\DeclareMathOperator*{\argmax}{arg\,max}
\DeclareMathOperator*{\argmin}{arg\,min}

\theoremstyle{plain}
\newtheorem{theorem}{Theorem}
\newtheorem{lemma}[theorem]{Lemma}
\newtheorem{corollary}[theorem]{Corollary}
\newtheorem{proposition}[theorem]{Proposition}

\theoremstyle{definition}
\newtheorem{assumption}{Assumption}
\newtheorem{definition}{Definition}

\theoremstyle{remark}
\newtheorem{remark}[theorem]{Remark}

\begin{document}
	
\title{Long-memory GARCH via a two-dimensional Markov chain}
\author[1]{Kyungsub Lee}
\author[1]{Kennedy Titus Kayaki}

\affil[1]{Department of Statistics, Yeungnam University,
	Gyeongsan, Republic of Korea}
	
\maketitle

\begin{abstract}
	This paper proposes a GARCH-type volatility model in which
	level-and-slope updates of a latent power-law kernel generate
	state-dependent decay of past shocks within a two-dimensional Markov
	state.  We derive a joint Foster--Lyapunov condition and establish
	positive Harris recurrence and uniqueness of the invariant
	distribution.  
	Simulations show substantial low-frequency persistence
	in log-squared innovations, especially near the diagnostic stability
	boundary.  
	Empirically, the model captures a substantial portion of
	observed volatility persistence and delivers competitive out-of-sample
	forecast accuracy using only a two-dimensional Markov state.
\end{abstract}

\section{Introduction}

This paper proposes a long-memory GARCH-type model admitting a finite-dimensional Markov representation 
and studies its stability.  
The model captures persistent volatility dynamics while
retaining a finite-dimensional Markov representation.  
We prove positive Harris recurrence of the resulting state process under
a joint stability condition that can be checked numerically from the
associated drift functions.

ARCH and GARCH models provide a basic framework for conditional
heteroskedasticity in financial returns
\citep{Engle1982,Bollerslev1986}.  In these models, conditional variance
depends on past squared innovations and, in the GARCH case, on past
conditional variances.  This feedback structure captures volatility
clustering in a simple recursive form.

Empirically, however, volatility persistence is often stronger than what
is implied by conventional GARCH specifications.  Volatility proxies such
as squared and absolute returns often exhibit slowly decaying dependence
\citep{Ding1993}.  This motivated extensions of the GARCH framework
aimed at stronger persistence.
Fractional GARCH models, including FIGARCH, FIEGARCH, and HYGARCH,
introduce fractional or hyperbolic decay into the volatility recursion
\citep{Baillie1996,BollerslevMikkelsen1996,Davidson2004}.  A related
approach is to use infinite-order ARCH dynamics, which extend
finite-order GARCH recursions by allowing longer lag structures
\citep{GiraitisKokoszkaLeipus2000}.

Long-memory volatility has also been studied outside the GARCH
framework.  Long-memory stochastic volatility models generate persistent
dependence through latent volatility factors with slowly decaying
autocovariances \citep{Breidt1998}.  More recently, rough-volatility
models have emphasized highly irregular volatility paths and
low-regularity stochastic drivers
\citep{GatheralJaissonRosenbaum2018,DiNunnoEtAl2023}.
Recent work also emphasizes the difficulty of empirically distinguishing
long-memory and rough-volatility specifications
\citep{LiPhillipsShiYu2025}.
In these approaches, persistence is generated
by a mechanism imposed on the recursion, whether a fractional filter, an
infinite-order structure, a latent long-memory component, or a rough
driver.

This paper takes a different approach.  It adapts the level-and-slope
matching construction of \citet{Lee2026} to conditional volatility
modeling.  The original construction updates power-law intensities while
preserving a finite-dimensional Markovian state.  Here, the same
mechanism is applied to a latent volatility kernel.  Each squared return
shock updates the kernel by matching its level and slope.  The resulting
state has two coordinates: one represents the variance level, and the
other represents the endogenous memory scale.

Related uses of slowly decaying or power-law structures appear in the
Hawkes-process literature; see, for example,
\citet{BacryDayriMuzy2012}, \citet{HardimanBercotBouchaud2013},
\citet{zhang2016modeling}, \citet{KanazawaSornette2020PRL}, and
\citet{kanazawa2021ubiquitous}.  Here the power-law structure is used not
as an intensity, but as a latent kernel that determines conditional
volatility.

The proposed model also differs from deterministic-coefficient
\(\mathrm{ARCH}(\infty)\) models.  Its effective decay factors are random
and endogenous, because they are determined by the evolving memory scale
rather than by fixed deterministic coefficients.  Persistence is
therefore generated by the dynamics of the state itself.

The main mathematical contribution is a stability theory for this
Markovian long-memory volatility model.  The memory scale evolves with
the shocks, so stability does not reduce to the one-dimensional
log-moment condition of standard GARCH models.  Instead,
the paper derives a Foster--Lyapunov drift condition for a Lyapunov
function that couples the variance-level coordinate with the memory-scale
coordinate.  Under this condition, together with regularity assumptions
on the innovation distribution, the Markov chain is positive Harris
recurrent and admits a unique invariant probability measure.

The argument is based on the general-state-space Markov-chain framework
of \citet{meyn2009markov}.  Related uses of this approach include
nonlinear autoregressive models
\citep{AnHuang1996,DoucFortMoulinesSoulier2004}, GARCH-type models
\citep{Cline2007,MeitzSaikkonen2008,BoussamaFuchsStelzer2011,
	MeitzSaikkonen2025}, discrete-time Hawkes processes
\citep{CostaMaillardMuraro2024}, and self-exciting point-process models
\citep{Lee2026FRPP}.

The paper also examines the persistence generated by the model through
simulation.  The Markov recursion admits a random-coefficient
\(\mathrm{ARCH}(\infty)\) representation with endogenous decay factors.
When the memory scale is large, these decay factors remain close to one
over long horizons, producing slowly decaying dependence in volatility
proxies.
We use the local Whittle estimator of \citet{robinson1995gaussian},
treating log-squared returns as volatility proxies following
\citet{Ding1993,Breidt1998}.  Applied to simulated series, it indicates
substantial low-frequency persistence, particularly near the stability
frontier.

We also evaluate the model empirically using data from international
financial markets.  
The fitted model generates substantial long-memory
behavior in log-squared returns and closely matches the estimated
memory for several markets.  
Its in-sample likelihood is lower than that of FIGARCH in each of the six markets considered 
and lower than that of the asymmetric GARCH models for the five equity indices.  
In contrast, its out-of-sample performance is closer to that of FIGARCH: 
although FIGARCH has a lower average QLIKE loss across the six markets considered, 
the difference is statistically significant only for KOSPI and Bitcoin.  
Its QLIKE loss is also similar to that of GARCH((1,1)) in most markets.  
HAR-RV, which uses lagged realized variance directly, provides the most accurate forecasts overall.

The remainder of the paper is organized as follows. 
Section~\ref{Sec:model} defines the model and derives the two-dimensional
recursion from the level-and-slope matching construction. 
Section~\ref{sec:Markov} formulates the recursion as a Markov chain and
proves the Foster--Lyapunov drift condition under the joint stability
assumption. 
Section~\ref{sec:regularity} verifies the control-chain regularity
conditions and establishes positive Harris recurrence. 
Section~\ref{sec:long-memory} studies the model's low-frequency
persistence by simulation. 
Section~\ref{sect:empirical} examines its empirical long-memory properties and compares its in-sample fit and out-of-sample forecasting performance with standard volatility benchmarks. 
Section~\ref{sec:conclusion} concludes the paper.

\section{Model} \label{Sec:model}
	
We consider a discrete-time model for a financial return series \(\{\eta_n\}_{n\ge1}\) 
and its conditional variance process \(\{\sigma_n^2\}_{n\ge1}\).  
The variable \(\eta_n\) denotes the return, or mean-adjusted innovation, realized over the interval
\([t_{n-1},t_n]\), and
\[
\sigma_n^2
:=
\Var(\eta_n\mid\mathcal F_{n-1})
\]
is its conditional variance.  Thus \(\sigma_n^2\) is
\(\mathcal F_{n-1}\)-measurable, as in standard GARCH-type models.

In contrast to standard GARCH recursions with a fixed autoregressive
decay, we generate the conditional variance from a latent power-law
kernel.  After observing \(\eta_n\) at time \(t_n\), we associate with
the next period \([t_n,t_{n+1}]\) the kernel
\begin{equation}
	h_n(t)
	=
	\mu+\frac{a_n}{(t+c_n)^p},
	\qquad t\ge0,
	\label{Eq:h}
\end{equation}
where \(\mu>0\) is the long-run baseline level, \(p>1\) is the
power-law decay exponent, and \((a_n,c_n)\) are state variables known
at time \(t_n\).  The argument \(t\) is a forward time variable,
measuring elapsed time from \(t_n\) toward the next observation time.

The conditional variance of the next innovation is defined as the
initial value of this updated kernel:
\begin{equation}
	\sigma_{n+1}^{2}
	:=
	h_n(0)
	=
	\mu+\frac{a_n}{c_n^p}
	=:
	\mu+X_n,
	\label{eq:sigma}
\end{equation}
where
\[
X_n:=\frac{a_n}{c_n^p}
\]
is the excess-variance state after the \(n\)-th innovation has been
incorporated.  Thus the pair \((X_n,c_n)\) represents the post-\(n\)
volatility state: \(X_n\) fixes the next conditional variance, while
\(c_n\) controls the future decay rate of the latent kernel.

With this notation, the timing convention is
\begin{equation}
	\underbrace{(X_{n-1},\, c_{n-1})}_{\mathcal F_{n-1}}
	\;\longrightarrow\;
	\underbrace{\sigma_n^{2} = \mu + X_{n-1}}_{\text{variance set}}
	\;\longrightarrow\;
	\underbrace{\eta_n \mid \mathcal F_{n-1}
		\sim (0,\,\sigma_n^{2})}_{\text{innovation drawn}}
	\;\longrightarrow\;
	\underbrace{(X_n,\, c_n)}_{\text{state updated}}.
	\label{eq:model-timing}
\end{equation}
Here the notation
\(\eta_n\mid\mathcal F_{n-1}\sim(0,\sigma_n^2)\) specifies only the
conditional mean and variance; the precise law is introduced below
through the normalized squared innovation.  Thus \(X_{n-1}\) determines the
conditional variance of \(\eta_n\), whereas \(X_n\), after incorporating
the realized shock \(\eta_n^2\), determines the next conditional
variance \(\sigma_{n+1}^2\).
Figure~\ref{fig:kernel-timing} illustrates this update.

\paragraph{Update rules.}
The following update rule is based on the level-and-slope matching
construction introduced in \citet{Lee2026}.
The state variables \((a_n,c_n)\) are updated at time \(t_n\), after the
innovation \(\eta_n\) has been realized.  
We take \(\tau:=t_n-t_{n-1}>0\) to be a constant inter-observation time.  
During the interval \([t_{n-1},t_n]\), the previous kernel \(h_{n-1}\) decays from
\(h_{n-1}(0)\) to \(h_{n-1}(\tau)\).  
At time \(t_n\), the realized shock
\(\eta_n^2\) contributes a new power-law component
\[
\frac{\alpha\eta_n^2}{(t+\gamma)^p},
\]
where \(\alpha>0\) controls the shock amplitude and \(\gamma>0\) is the
initial decay offset.  The updated kernel \(h_n\) is chosen within the
same two-parameter family by matching both the level and the slope at
the new origin \(t=0\).

\begin{itemize}
	\item \textbf{Level matching.}  The initial level of the new kernel
	equals the decayed level of the previous kernel plus the level of the
	new shock component:
	\[
	h_n(0)
	=
	h_{n-1}(\tau)
	+
	\frac{\alpha}{\gamma^p}\eta_n^2.
	\]
	
	\item \textbf{Slope matching.}  The initial slope of the new kernel
	equals the decayed slope of the previous kernel plus the initial slope
	of the new shock component:
	\[
	\left.\frac{\partial h_n}{\partial t}\right|_{t=0}
	=
	\left.\frac{\partial h_{n-1}}{\partial t}\right|_{t=\tau}
	+
	\left.
	\frac{\partial}{\partial t}
	\left(
	\frac{\alpha\eta_n^2}{(t+\gamma)^p}
	\right)
	\right|_{t=0}.
	\]
\end{itemize}

\begin{figure}[t]
	\centering
	\includegraphics[width=0.9\linewidth]{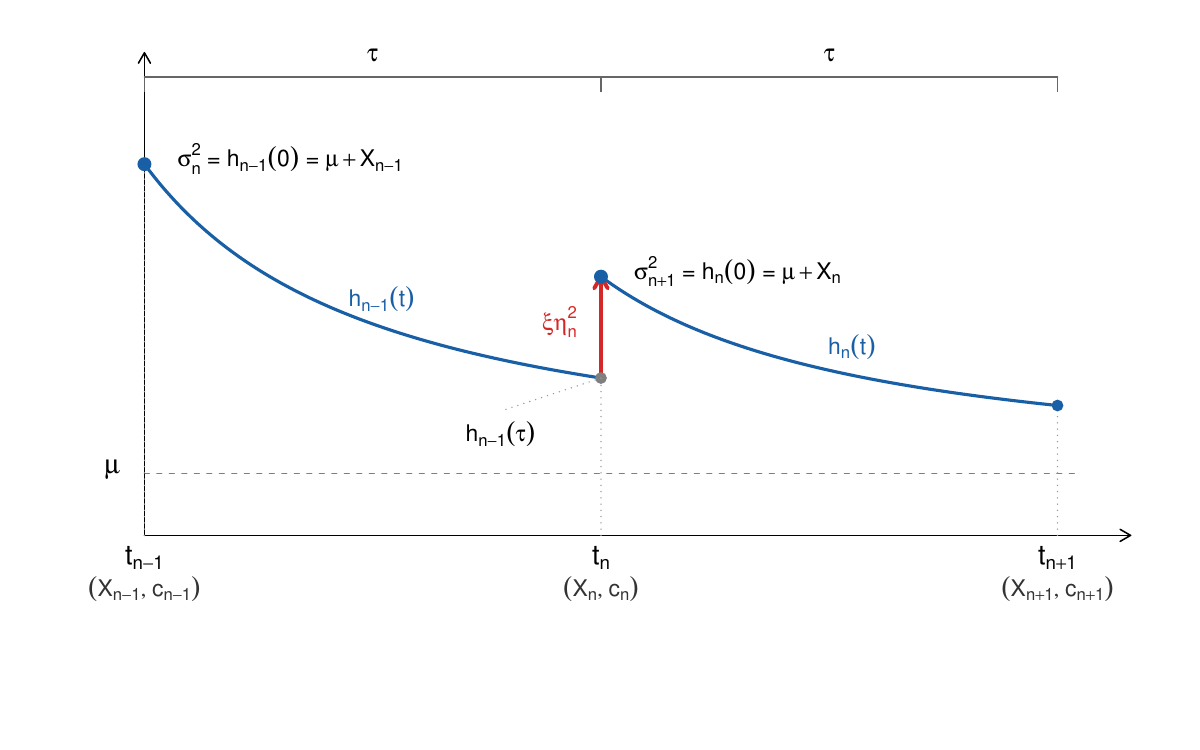}
	\caption{Power-law kernel and the volatility state update}
	\label{fig:kernel-timing}
\end{figure}

Substituting \eqref{Eq:h} into these two matching conditions, and using
\(\partial_t(t+c)^{-p}=-p(t+c)^{-(p+1)}\), yields
\begin{align}
	\frac{a_n}{c_n^p}
	&=
	\frac{a_{n-1}}{(\tau+c_{n-1})^p}
	+
	\frac{\alpha}{\gamma^p}\eta_n^2,
	\label{eq:a_update_level}\\
	\frac{a_n}{c_n^{p+1}}
	&=
	\frac{a_{n-1}}{(\tau+c_{n-1})^{p+1}}
	+
	\frac{\alpha}{\gamma^{p+1}}\eta_n^2.
	\label{eq:c_update_slope}
\end{align}
Setting \(\xi:=\alpha/\gamma^p\), and recalling that
\(X_n=a_n/c_n^p\), the level and slope equations become
\begin{align}
	X_n
	&=
	X_{n-1}\left(1+\frac{\tau}{c_{n-1}}\right)^{-p}
	+
	\xi\eta_n^2,
	\label{eq:X_update}\\
	\frac{X_n}{c_n}
	&=
	\frac{X_{n-1}}{c_{n-1}}
	\left(1+\frac{\tau}{c_{n-1}}\right)^{-(p+1)}
	+
	\frac{\xi}{\gamma}\eta_n^2.
	\label{eq:Y_update}
\end{align}
Together, \eqref{eq:X_update}--\eqref{eq:Y_update} motivate the following
two-dimensional Markov recursion for the updated volatility state
\(\mathbf X_n=(X_n,c_n)\).

We write the innovation in standardized form as
\[
\eta_n=\sigma_n\varepsilon_n,
\qquad
\sigma_n^2=\mu+X_{n-1},
\]
where \((\varepsilon_n)_{n\ge1}\) are i.i.d. with
\(\E [\varepsilon_n] = 0\) and \(\E[\varepsilon_n^2]=1\).  Equivalently, with
\(U_n:=\varepsilon_n^2\),
\begin{equation}
\eta_n^2=(\mu+X_{n-1})U_n, \label{eq:eta_sq}
\end{equation}
where \(U_n>0\) almost surely and \(\E[U_n]=1\).

\begin{definition}[Long-memory GARCH Markov chain]
	\label{Def:Model}
	Let
	\[
	\mathcal S:=[0,\infty)\times[\gamma,\infty)
	\]
	and let
	\[
	\mu>0,\qquad \gamma>0,\qquad \xi>0,\qquad p>1,
	\qquad \tau>0.
	\]
	Given \(\mathbf X_{n-1}=(X_{n-1},c_{n-1})\in\mathcal S\), define
	\[
	\mathbf X_n
	=
	\Psi(\mathbf X_{n-1},\eta_n^2)
	=
	\bigl(
	\Psi_x(X_{n-1},c_{n-1},\eta_n^2),
	\Psi_c(X_{n-1},c_{n-1},\eta_n^2)
	\bigr),
	\]
	where, for \(r(c):=(1+\tau/c)^{-1}\),
	\begin{align}
		\Psi_x(x,c,y)
		&=
		x\,r(c)^p+\xi y,
		\label{Eq:Psi_x}\\
		\Psi_c(x,c,y)
		&=
		\frac{\Psi_x(x,c,y)}
		{\dfrac{x}{c}r(c)^{p+1}
			+\dfrac{\xi}{\gamma}y}.
		\label{Eq:Psi_c}
	\end{align}
\end{definition}

\begin{remark}
	Since \(U_n>0\) almost surely and \(\mu+X_{n-1}>0\), we have
	\(\eta_n^2=(\mu+X_{n-1})U_n>0\) almost surely.  Hence
	\(\Psi_x(x,c,\eta_n^2)>0\) almost surely for all \(x\ge0\), and the
	denominator in \eqref{Eq:Psi_c} is strictly positive almost surely.
	Thus the update map \(\Psi\) is well-defined on \(\mathcal S\).
\end{remark}

The persistence of the model can be seen from its random-coefficient
ARCH representation.  Define the time-varying decay factor
\begin{equation}
	\beta_n
	:=
	r(c_n)^p
	=
	\left(1+\frac{\tau}{c_n}\right)^{-p}
	\in(0,1).
\end{equation}
Then the level recursion \eqref{eq:X_update} can be written as
\begin{equation}
	X_n
	=
	\beta_{n-1}X_{n-1}
	+
	\xi\eta_n^2.
	\label{eq:random-coef-arch}
\end{equation}
By backward iteration, for a stationary version of the process, we obtain
\begin{equation}
	\sigma_{n+1}^2
	=
	\mu
	+
	\xi
	\sum_{k=0}^{\infty}
	\left(\prod_{j=1}^{k}\beta_{n-j}\right)\eta_{n-k}^2,
	\label{eq:arch-infinity-representation}
\end{equation}
where the empty product is interpreted as one.  The weights in this
representation are random and state-dependent, since
\(\beta_n=r(c_n)^p\).  Thus the model has an
\(\mathrm{ARCH}(\infty)\)-type representation with endogenous decay
weights, while the state process remains two-dimensional and Markovian.

\section{Markov chain} \label{sec:Markov}

\subsection{Transition kernel}
\label{subsec:transition-kernel}

We now formulate the process \(\mathbf X_n=(X_n,c_n)\) as a
time-homogeneous Markov chain.  The state space is
\begin{equation}
	\mathcal S
	:=
	[0,\infty)\times[\gamma,\infty),
	\label{eq:state-space}
\end{equation}
where the lower bound \(c_n\ge\gamma\) follows from the update structure.
Using the standardized squared innovation representation 
in Eq.~\eqref{eq:eta_sq},
where \((U_n)_{n\ge1}\) are i.i.d. with common distribution \(F_U\), 
the one-step
transition kernel is
\begin{equation}
	P((x,c),B)
	=
	F_U\left(
	\left\{
	u\ge0:
	\bigl(
	\Psi_x(x,c,(\mu+x)u),
	\Psi_c(x,c,(\mu+x)u)
	\bigr)
	\in B
	\right\}
	\right),
	\label{Eq:P}
\end{equation}
for \(B\in\mathcal B(\mathcal S)\).  Thus \(\mathbf X_n\) is a general
state-space Markov chain on \(\mathcal S\).  The \(n\)-step transition
kernel is denoted by \(P^n\).

When \(F_U\) admits a Lebesgue density \(f_U\), the one-step action of
\(P\) on a nonnegative measurable function \(g:\mathcal S\to\mathbb R_+\) is
\begin{equation}
	Pg(x,c)
	=
	\int_0^\infty
	g\bigl(
	\Psi_x(x,c,(\mu+x)u),
	\Psi_c(x,c,(\mu+x)u)
	\bigr)
	f_U(u)\,\D u.
	\label{eq:Pg}
\end{equation}

\begin{lemma}[One-step bounds for the memory scale]
	\label{lem:c-lower-bound}
	Suppose that \(c_{n-1}\ge\gamma\) and
	\(\eta_n^2>0\) almost surely.  Then
	\begin{equation}
		\gamma
		\le
		c_n
		\le
		c_{n-1}+\tau
		\qquad\text{almost surely}.
		\label{eq:c-one-step-bounds}
	\end{equation}
	Consequently, if \(c_0\ge\gamma\), then
	\(c_n\ge\gamma\) for every \(n\ge0\) almost surely, and
	\[
	c_n-c_{n-1}\le\tau
	\qquad\text{almost surely}.
	\]
\end{lemma}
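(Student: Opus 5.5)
The plan is to read $c_n$ as a weighted harmonic-type mean of two "memory scales": $\gamma$ from the new shock and $c_{n-1}+\tau$ from the decayed previous kernel. Write $x=X_{n-1}\ge 0$, $c=c_{n-1}\ge\gamma$, $y=\eta_n^2>0$. Set
\[
A:=x\,r(c)^p\ge 0,\qquad B:=\xi y>0 .
\]
Since $r(c)/c=(c+\tau)^{-1}$, the denominator in \eqref{Eq:Psi_c} equals
\[
\frac{x}{c}r(c)^{p+1}+\frac{\xi}{\gamma}y=\frac{A}{c+\tau}+\frac{B}{\gamma}.
\]
Hence
\[
c_n=\frac{A+B}{\dfrac{A}{c+\tau}+\dfrac{B}{\gamma}} .
\]
This is the weighted harmonic mean of $c+\tau$ and $\gamma$, with weights $A/(A+B)$ and $B/(A+B)$. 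These weights are well defined because $A+B\ge B>0$ almost surely, which is the positivity noted in the remark after Definition~\ref{Def:Model}.

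A weighted harmonic mean of two positive numbers lies between them. This gives
\[
\min(\gamma,c+\tau)\le c_n\le\max(\gamma,c+\tau).
\]
Because $c\ge\gamma$ and $\tau>0$, we have $c+\tau>\gamma$, so the bounds become $\gamma\le c_n\le c_{n-1}+\tau$. If one prefers not to cite the mean inequality, the bounds can be checked directly by cross-multiplying with the positive denominator:
\[
A+B\ge \gamma\Bigl(\tfrac{A}{c+\tau}+\tfrac{B}{\gamma}\Bigr)\iff A\Bigl(1-\tfrac{\gamma}{c+\tau}\Bigr)\ge 0,
\]
\[
A+B\le (c+\tau)\Bigl(\tfrac{A}{c+\tau}+\tfrac{B}{\gamma}\Bigr)\iff B\Bigl(1-\tfrac{c+\tau}{\gamma}\Bigr)\le 0 .
\]
Both hold because $\gamma<c+\tau$, $A\ge0$ and $B>0$. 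All of this holds on the almost-sure event $\{\eta_n^2>0\}$.

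The consequences then follow by induction on $n$. If $c_0\ge\gamma$, the one-step bound gives $c_1\ge\gamma$ almost surely, and so on. For each $n$ the required hypothesis $\eta_n^2=(\mu+X_{n-1})U_n>0$ holds almost surely by \eqref{eq:eta_sq}, since $U_n>0$ almost surely and $\mu>0$. A countable intersection of almost-sure events is almost sure, so $c_n\ge\gamma$ for all $n$ simultaneously, and $c_n-c_{n-1}\le\tau$ for every $n$ almost surely.

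The only real obstacle is the algebraic step of recognising the denominator as $A/(c+\tau)+B/\gamma$ via $r(c)/c=1/(c+\tau)$. We also need to handle the degenerate case $x=0$, where $A=0$. There the formula gives $c_n=\gamma$ exactly, which is consistent with both bounds.
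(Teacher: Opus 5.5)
Your proof is correct and follows essentially the same route as the paper, which also sets $A=x\,r(c)^p$ and $B=\xi\eta_n^2$. The paper likewise uses $r(c)/c=1/(c+\tau)$ to write $1/c_n$ as a convex combination of $1/(c+\tau)$ and $1/\gamma$, and concludes $\gamma\le c_n\le c+\tau$ from $\gamma\le c+\tau$; your cross-multiplication check and your induction with a countable intersection of almost-sure events just spell out what the paper says follows ``immediately.''
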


\begin{proof}
	Fix
	\[
	(X_{n-1},c_{n-1})=(x,c),
	\qquad
	x\ge0,\quad c\ge\gamma,
	\]
	and write \(z=\eta_n^2>0\).  Define
	\[
	A:=x r(c)^p,
	\qquad
	B:=\xi z,
	\]
	where \(r(c)=c/(c+\tau)\).  Since
	\[
	\frac{x}{c}r(c)^{p+1}
	=
	xr(c)^p\frac{r(c)}{c}
	=
	\frac{A}{c+\tau},
	\]
	the update of the scale coordinate can be written as
	\[
	c_n
	=
	\frac{A+B}
	{\dfrac{A}{c+\tau}+\dfrac{B}{\gamma}}.
	\]
	Equivalently,
	\begin{equation}
		\frac1{c_n}
		=
		\frac{A}{A+B}\frac1{c+\tau}
		+
		\frac{B}{A+B}\frac1{\gamma}.
		\label{eq:c-harmonic-combination}
	\end{equation}
	Because \(A\ge0\), \(B>0\), and the two coefficients on the
	right-hand side of \eqref{eq:c-harmonic-combination} are
	nonnegative and sum to one, \(1/c_n\) is a convex combination of
	\(1/(c+\tau)\) and \(1/\gamma\).  Since
	\[
	\gamma\le c+\tau,
	\]
	we have
	\[
	\frac1{c+\tau}
	\le
	\frac1{c_n}
	\le
	\frac1{\gamma}.
	\]
	Taking reciprocals gives
	\[
	\gamma\le c_n\le c+\tau.
	\]
	The remaining conclusions follow immediately.
\end{proof}

The main mathematical objective is to prove that \(\mathbf X_n\) is
positive Harris recurrent, and hence admits a unique stationary
distribution.  
We first prove a Foster--Lyapunov drift condition.  
We then verify the regularity properties needed to apply the Meyn--Tweedie criterion, 
namely \(\psi\)-irreducibility and petite-set regularity.

\subsection{Drift functions and joint stability}
\label{sec:drift}

To establish positive Harris recurrence via the Foster--Lyapunov criterion
(Theorem~\ref{thm:FL} in Appendix~\ref{app:markov-background}), we construct
a Lyapunov function of the form
\[
V(x,c)=\log(1+x)+\lambda c.
\]
Throughout this subsection, \(\mathbb E_{x,c}\) denotes expectation for the
chain started from \((X_0,c_0)=(x,c)\).  We write
\[
\Delta_c(x,c)
:=
\mathbb E_{x,c}[c_1-c],
\qquad
\Delta_X(x,c)
:=
\mathbb E_{x,c}[\log(1+X_1)-\log(1+x)]
\]
for the expected one-step increments of \(c\) and \(\log(1+X)\),
respectively.

\begin{lemma}[Envelope for the \(c\)-drift]
	\label{lem:c-envelope}
	Assume \(U>0\) almost surely. For \((x,c)\in\mathcal S\), let
	\[
	r(c):=\left(1+\frac{\tau}{c}\right)^{-1}.
	\]
	Then
	\begin{equation}
		\Delta_c(x,c)
		=
		-(c-\gamma)
		+
		\frac{c-\gamma r(c)}{r(c)}
		\E\left[
		\frac{\kappa(x,c)}{\kappa(x,c)+U}
		\right],
		\label{eq:drift-exact-rearranged}
	\end{equation}
	where
	\[
	\kappa(x,c)
	:=
	\frac{\gamma r(c)^{p+1}}{\xi c}\frac{x}{\mu+x}.
	\]
	Consequently,
	\[
	\Delta_c(x,c)\le S(c)
	\qquad\text{for all }x\ge0,\ c\ge\gamma,
	\]
	where
	\begin{equation}
		S(c)
		:=
		-(c-\gamma)
		+
		\frac{c-\gamma r(c)}{r(c)}
		\E\left[
		\frac{\kappa^*(c)}{\kappa^*(c)+U}
		\right],
		\qquad
		\kappa^*(c)
		:=
		\frac{\gamma r(c)^{p+1}}{\xi c}.
		\label{eq:S-def}
	\end{equation}
	Moreover, \(S\) is continuous and satisfies
	\(S(\gamma)>0,\)
	\[
	S(c)\le \tau
	\qquad\text{for all }c\ge\gamma,
	\]
	and
	\[
	S(c)\to-\infty
	\qquad\text{as }c\to\infty.
	\]
\end{lemma}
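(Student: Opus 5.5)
The plan is to first derive an exact closed form for \(c_1\) in terms of \(U\), then read off all the stated properties from it. As in the proof of Lemma~\ref{lem:c-lower-bound}, set \(A:=x r(c)^p\) and \(B:=\xi(\mu+x)U\), so that
\[
c_1=\frac{A+B}{\dfrac{A}{c+\tau}+\dfrac{B}{\gamma}}.
\]
The key observation is that \(A/(c+\tau)=x r(c)^{p+1}/c\). Hence
\[
\frac{\gamma A}{(c+\tau)B}=\frac{\kappa(x,c)}{U},
\qquad
\frac{A}{B}=\frac{\kappa(x,c)\,c}{\gamma r(c)\,U}.
\]
Dividing the numerator and denominator of \(c_1\) by \(B/\gamma\) and multiplying through by \(U\) gives (writing \(\kappa=\kappa(x,c)\), \(r=r(c)\))
\[
c_1=\frac{\gamma U+\kappa c/r}{U+\kappa}
=\gamma+\frac{c-\gamma r}{r}\,\frac{\kappa}{\kappa+U}.
\]
Subtracting \(c\) and taking expectations yields \eqref{eq:drift-exact-rearranged}. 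The expectation is finite because the integrand lies in \([0,1]\). When \(x=0\) we have \(\kappa=0\), and the formula correctly gives \(c_1=\gamma\).

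For the envelope, note that \(x/(\mu+x)\le1\), so \(\kappa(x,c)\le\kappa^*(c)\). The map \(k\mapsto k/(k+U)\) is nondecreasing for \(U>0\). The prefactor is nonnegative because \(r<1\) gives \(c-\gamma r\ge c-\gamma\ge0\). Together these give \(\Delta_c\le S(c)\).

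Next I will verify the properties of \(S\).
\begin{itemize}
\item \emph{Continuity.} The functions \(r\) and \(\kappa^*\) are continuous in \(c\), and \(k\mapsto k/(k+U)\) is continuous with values in \([0,1]\). Dominated convergence with dominating function \(1\) then makes \(c\mapsto\E[\kappa^*(c)/(\kappa^*(c)+U)]\) continuous, and hence \(S\) is continuous.
\item \emph{Positivity at \(\gamma\).} At \(c=\gamma\) the first term vanishes and \((c-\gamma r)/r=\gamma(1-r)/r>0\). Also \(\kappa^*(\gamma)>0\) and \(U<\infty\) a.s., so the expectation is strictly positive. Hence \(S(\gamma)>0\).
\item \emph{Upper bound \(\tau\).} Bounding the expectation by \(1\) gives \(S(c)\le -(c-\gamma)+c/r-\gamma=c/r-c=\tau\).
\end{itemize}

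The step needing the most care is the limit \(S(c)\to-\infty\). The naive bound \(\kappa^*/(\kappa^*+U)\le\kappa^*/U\) would require \(\E[1/U]<\infty\), which is not assumed, so I will avoid it. Instead I will use that \((c-\gamma r)/r\le c/r=c+\tau\). As \(c\to\infty\), \(\kappa^*(c)\le\gamma/(\xi c)\to0\), so dominated convergence (using \(U>0\) a.s.) gives \(\E[\kappa^*(c)/(\kappa^*(c)+U)]\to0\). Therefore the positive term is \((c+\tau)\,o(1)=o(c)\), while the negative term is \(-(c-\gamma)\). Hence \(S(c)\le-(c-\gamma)+o(c)\to-\infty\).
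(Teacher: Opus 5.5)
Your proof is correct and follows essentially the same route as the paper. You derive the exact representation $c_1=\gamma+\frac{c-\gamma r}{r}\frac{\kappa}{\kappa+U}$ by dividing the harmonic-combination form by $B/\gamma$, where the paper uses a partial-fraction identity in its appendix, and you handle the $c\to\infty$ limit by dominated convergence exactly as the paper does. Your only deviations are streamlinings. You get $\Delta_c\le S$ from the pointwise comparison $\kappa(x,c)\le\kappa^*(c)$ plus nonnegativity of the prefactor (which the paper's monotonicity argument also tacitly needs), rather than from a monotone limit as $x\to\infty$, and you get $S\le\tau$ by bounding the expectation by one rather than invoking Lemma~\ref{lem:c-lower-bound}.
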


\begin{proof}
	Fix \((X_0,c_0)=(x,c)\), let \(z=\eta_1^2=(\mu+x)U\), and write
	\(r=r(c)\). By the algebraic rearrangement in
	Appendix~\ref{app:c-drift-algebra}, applied to
	\[
	c_1
	=
	\frac{c(xr^p+\xi z)}
	{xr^{p+1}+(\xi c/\gamma)z},
	\]
	we have
	\[
	c_1-c
	=
	-(c-\gamma)
	+
	\frac{c-\gamma r}{r}
	\frac{k}{k+z},
	\qquad
	k:=\frac{\gamma xr^{p+1}}{\xi c}.
	\]
	Taking expectations gives
	\[
	\Delta_c(x,c)
	=
	-(c-\gamma)
	+
	\frac{c-\gamma r}{r}
	\E\left[
	\frac{k}{k+z}
	\right].
	\]
	Because \(z=(\mu+x)U\) and
	\[
	\frac{k}{\mu+x}
	=
	\frac{\gamma r^{p+1}}{\xi c}\frac{x}{\mu+x}
	=
	\kappa(x,c),
	\]
	we obtain
	\[
	\frac{k}{k+z}
	=
	\frac{\kappa(x,c)}{\kappa(x,c)+U}.
	\]
	This proves \eqref{eq:drift-exact-rearranged}.
	
	Next,
	\[
	\kappa(x,c)
	=
	\kappa^*(c)\frac{x}{\mu+x}.
	\]
	For fixed \(c\), the map \(x\mapsto x/(\mu+x)\) is nondecreasing, and
	for each \(U>0\), the map \(y\mapsto y/(y+U)\) is nondecreasing on
	\([0,\infty)\). Hence \(x\mapsto\Delta_c(x,c)\) is nondecreasing.
	Moreover,
	\[
	\kappa(x,c)\nearrow \kappa^*(c)
	\qquad\text{as }x\to\infty.
	\]
	By the monotone convergence theorem,
	\[
	\Delta_c(x,c)
	\longrightarrow
	-(c-\gamma)
	+
	\frac{c-\gamma r(c)}{r(c)}
	\E\left[
	\frac{\kappa^*(c)}{\kappa^*(c)+U}
	\right]
	=
	S(c).
	\]
	Therefore
	\[
	\Delta_c(x,c)\le S(c)
	\qquad\text{for all }x\ge0,\ c\ge\gamma.
	\]
	
	We now prove continuity of \(S\). Let \(c_n\to c\) with
	\(c_n,c\ge\gamma\). Since \(r\) and \(\kappa^*\) are continuous,
	\[
	\kappa^*(c_n)\to \kappa^*(c).
	\]
	Because \(U>0\) almost surely,
	\[
	\frac{\kappa^*(c_n)}{\kappa^*(c_n)+U}
	\longrightarrow
	\frac{\kappa^*(c)}{\kappa^*(c)+U}
	\qquad\text{a.s.}
	\]
	The integrands are bounded by \(1\). Hence, by the dominated convergence
	theorem,
	\[
	\E\left[
	\frac{\kappa^*(c_n)}{\kappa^*(c_n)+U}
	\right]
	\longrightarrow
	\E\left[
	\frac{\kappa^*(c)}{\kappa^*(c)+U}
	\right].
	\]
	It follows from \eqref{eq:S-def} that \(S(c_n)\to S(c)\). Hence \(S\) is
	continuous.
	
	By Lemma~\ref{lem:c-lower-bound},
	\[
	c_1-c\le\tau
	\qquad\text{almost surely}.
	\]
	Therefore
	\[
	\Delta_c(x,c)
	=
	\E_{x,c}[c_1-c]
	\le\tau
	\qquad
	\text{for all }x\ge0,\ c\ge\gamma.
	\]
	Since
	\[
	\Delta_c(x,c)\longrightarrow S(c)
	\qquad\text{as }x\to\infty,
	\]
	it follows that
	\[
	S(c)\le\tau
	\qquad\text{for all }c\ge\gamma.
	\]
	
	Finally, as \(c\to\infty\), \(r(c)\to1\) and
	\[
	\kappa^*(c)
	=
	\frac{\gamma r(c)^{p+1}}{\xi c}
	\longrightarrow0.
	\]
	Since \(U>0\) almost surely,
	\[
	\frac{\kappa^*(c)}{\kappa^*(c)+U}
	\longrightarrow0
	\qquad\text{a.s.}
	\]
	and the integrand is bounded by \(1\). Hence, by dominated convergence,
	\[
	\E\left[
	\frac{\kappa^*(c)}{\kappa^*(c)+U}
	\right]
	\longrightarrow0.
	\]
	Dividing \eqref{eq:S-def} by \(c\), we get
	\[
	\frac{S(c)}{c}
	=
	-\left(1-\frac{\gamma}{c}\right)
	+
	\frac{c-\gamma r(c)}{c\,r(c)}
	\E\left[
	\frac{\kappa^*(c)}{\kappa^*(c)+U}
	\right].
	\]
	Now
	\[
	\frac{c-\gamma r(c)}{c\,r(c)}
	=
	\frac1{r(c)}-\frac{\gamma}{c}
	\longrightarrow1,
	\]
	while the expectation term tends to \(0\). Therefore
	\[
	\lim_{c\to\infty}\frac{S(c)}{c}
	=
	-1.
	\]
	In particular,  \(S(c)\to-\infty\).
	
	At \(c=\gamma\), we have
	\[
	\frac{c-\gamma r(c)}{r(c)}
	=
	\frac{\gamma-\gamma r(\gamma)}{r(\gamma)}
	=
	\tau.
	\]
	Hence
	\[
	S(\gamma)
	=
	\tau
	\E\left[
	\frac{\kappa^*(\gamma)}{\kappa^*(\gamma)+U}
	\right].
	\]
	Since \(\kappa^*(\gamma)>0\) and \(U>0\) almost surely, the expectation is
	strictly positive. Therefore \(S(\gamma)>0\).
\end{proof}

In the Gaussian case, the envelope \(S(c)\) can be written in closed form,
which is useful for numerical verification of the stability condition.

\begin{corollary}[Gaussian formula for the \(c\)-drift envelope]
	\label{cor:gaussian-S}
	Suppose \(U\sim\chi^2(1)\).  For \(\kappa>0\), define
	\[
	h(\kappa)
	:=
	\sqrt{\frac{\pi}{2\kappa}}\,
	\e^{\kappa/2}
	\mathrm{erfc} \left(\sqrt{\kappa/2}\right).
	\]
	Then
	\begin{equation}
		S(c)
		=
		-(c-\gamma)
		+
		\frac{\gamma r(c)^p\{c-\gamma r(c)\}}{\xi c}
		h\left(\kappa^*(c)\right),
		\label{eq:S-gaussian}
	\end{equation}
	where
	\[
	\kappa^*(c)=\frac{\gamma r(c)^{p+1}}{\xi c}.
	\]
	Consequently, \(S(c)<0\) if and only if
	\begin{equation}
		\frac{\gamma r(c)^p\{c-\gamma r(c)\}}{\xi c}
		h\left(\kappa^*(c)\right)
		<
		c-\gamma.
		\label{eq:S-negative-gaussian}
	\end{equation}
\end{corollary}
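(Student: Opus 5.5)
The plan is to start from the envelope formula \eqref{eq:S-def} of Lemma~\ref{lem:c-envelope} and evaluate the expectation $\E[\kappa/(\kappa+U)]$ in closed form when $U\sim\chi^2(1)$. Write $\kappa/(\kappa+U)=\kappa\cdot 1/(\kappa+U)$. The whole corollary then reduces to the single identity
\[
\E\left[\frac{1}{\kappa+U}\right]=\frac{1}{\kappa}\,h(\kappa)\cdot\kappa\cdot\frac{1}{\kappa}\cdot\kappa
\quad\text{i.e.}\quad
\E\left[\frac{\kappa}{\kappa+U}\right]=\kappa\,h(\kappa)\cdot\frac{1}{\kappa}\cdot\kappa ,
\]
once the constants are sorted out. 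Concretely, I expect
\[
\E\left[\frac{1}{\kappa+U}\right]=\sqrt{\frac{\pi}{2\kappa}}\,\e^{\kappa/2}\,\mathrm{erfc}\bigl(\sqrt{\kappa/2}\bigr)=h(\kappa).
\]

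To prove this identity, write $U=Z^2$ with $Z\sim N(0,1)$, so that
\[
\E\left[\frac{1}{\kappa+Z^2}\right]=\frac{2}{\sqrt{2\pi}}\int_0^\infty\frac{\e^{-z^2/2}}{\kappa+z^2}\,\D z.
\]
This is a standard integral. Using $\frac{1}{\kappa+z^2}=\int_0^\infty \e^{-s(\kappa+z^2)}\D s$ and Tonelli gives
\[
\E\left[\frac{1}{\kappa+Z^2}\right]=\int_0^\infty \e^{-s\kappa}(1+2s)^{-1/2}\,\D s .
\]
Next substitute $1+2s=2t^2/\kappa$, equivalently $t=\sqrt{\kappa(1+2s)/2}$. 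This turns the integral into
\[
\sqrt{\frac{2}{\kappa}}\,\e^{\kappa/2}\int_{\sqrt{\kappa/2}}^\infty \e^{-t^2}\,\D t .
\]
With $\int_a^\infty \e^{-t^2}\D t=\frac{\sqrt\pi}{2}\mathrm{erfc}(a)$, this equals $h(\kappa)$. Tracking these constants is the only delicate step, and I would double-check them by letting $\kappa\to\infty$, where both sides should behave like $1/\kappa$.

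Now substitute into \eqref{eq:S-def}. We have $\E[\kappa^*/(\kappa^*+U)]=\kappa^* h(\kappa^*)$, so the second term becomes
\[
\frac{c-\gamma r}{r}\cdot\frac{\gamma r^{p+1}}{\xi c}\,h(\kappa^*)=\frac{\gamma r^{p}(c-\gamma r)}{\xi c}\,h(\kappa^*),
\]
which is exactly \eqref{eq:S-gaussian}. The equivalence \eqref{eq:S-negative-gaussian} then follows by moving $c-\gamma$ to the other side of $S(c)<0$. No sign issues arise there, since everything is just a rearrangement of an inequality.
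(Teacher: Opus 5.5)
Your proof is correct and follows the paper's route: substitute $\E[\kappa/(\kappa+U)]=\kappa h(\kappa)$ into \eqref{eq:S-def} and rearrange. Your Tonelli computation goes further than the paper by actually deriving the $\chi^2(1)$ identity $\E[1/(\kappa+U)]=h(\kappa)$, which the paper only asserts, and your constants check out; the one blemish is the garbled first display, whose spurious factors $\frac{1}{\kappa}\cdot\kappa$ cancel and should be replaced by the plain statement $\E[\kappa/(\kappa+U)]=\kappa h(\kappa)$.
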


\begin{proof}
	For \(U\sim\chi^2(1)\),
	\[
	\E\left[
	\frac{1}{\kappa+U}
	\right]
	=
	\sqrt{\frac{\pi}{2\kappa}}\,
	\e^{\kappa/2}
	\mathrm{erfc}\left(\sqrt{\kappa/2}\right)
	=
	h(\kappa).
	\]
	Hence
	\[
	\E\left[
	\frac{\kappa}{\kappa+U}
	\right]
	=
	\kappa h(\kappa).
	\]
	Substituting \(\kappa=\kappa^*(c)\) into \eqref{eq:S-def} yields
	\eqref{eq:S-gaussian}.  The equivalence \eqref{eq:S-negative-gaussian}
	is immediate from \eqref{eq:S-gaussian}.
\end{proof}

Figure~\ref{fig:c-envelope} illustrates the envelope in the Gaussian case
\(U\sim\chi^2(1)\).  Each light curve shows \(\Delta_c(x,\cdot)\) for a
fixed \(x\).  As \(x\to\infty\), the curves increase pointwise to
\(S(c)\).
The curve \(S(c)\) crosses zero, and its first sign-crossing threshold is denoted
by
\begin{equation}
c^*
:=
\inf\{c\ge\gamma:S(c)<0\}. \label{eq:c*}
\end{equation}

\begin{figure}[t]
	\centering
	\includegraphics[width=0.95\linewidth]{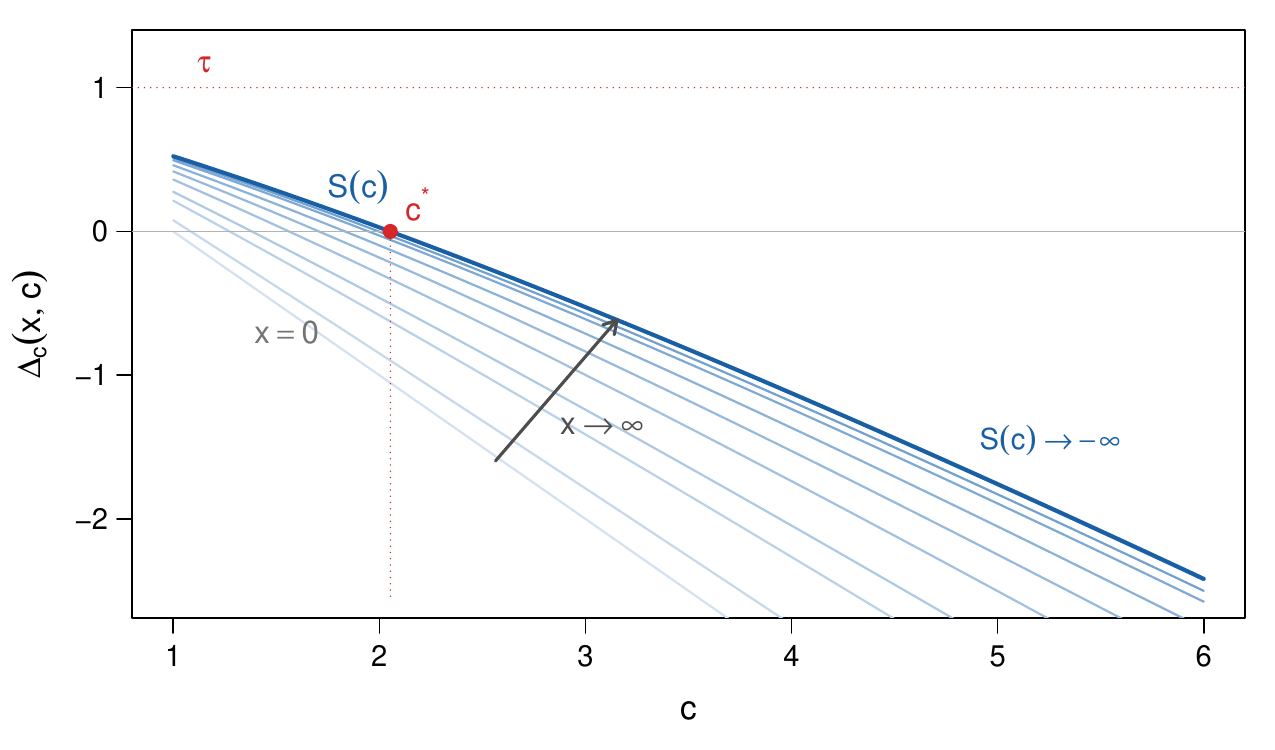}
	\caption{The scale-drift envelope \(S(c)\) for \(U\sim\chi^2(1)\),
		with \(\mu=0.01\), \(p=1.2\), \(\tau=1\), \(\xi=0.5\), \(\gamma=1\).}
	\label{fig:c-envelope}
\end{figure}

Define the large-$x$ log-drift function
\begin{equation}
	\Lambda(c)
	:=
	\E\left[
	\log\left(r(c)^p+\xi U\right)
	\right],
	\qquad c\ge\gamma.
	\label{eq:Lambda-def}
\end{equation}

\begin{lemma}[Properties of the large-\(x\) log-drift]
	\label{lem:Lambda-properties}
	Assume that \(U>0\) almost surely and
	\(\E\log(1+U)<\infty\).  Then the function
	\[
	\Lambda(c)
	=
	\E\left[
	\log\left(r(c)^p+\xi U\right)
	\right],
	\qquad c\ge\gamma,
	\]
	is finite, continuous, and strictly increasing on
	\([\gamma,\infty)\).  Moreover,
	\begin{equation}
		\lim_{c\to\infty}\Lambda(c)
		=
		\E\left[\log(1+\xi U)\right]
		>0.
		\label{eq:Lambda-limit}
	\end{equation}
\end{lemma}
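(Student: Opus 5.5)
The argument rests on two-sided bounds on the integrand, after which dominated convergence handles finiteness and continuity, and strict monotonicity comes from the map $c\mapsto r(c)$.

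\textbf{Bounds on the integrand.} For $c\ge\gamma$ we have $r(c)=c/(c+\tau)\in[r(\gamma),1)$. Hence
\[
\log\bigl(r(\gamma)^p+\xi U\bigr)\le \log\bigl(r(c)^p+\xi U\bigr)\le \log(1+\xi U).
\]
The upper bound is integrable, because
\[
\log(1+\xi U)\le \log\max(1,\xi)+\log(1+U)
\]
and $\E\log(1+U)<\infty$. The lower bound is handled by noting that $\log(r(\gamma)^p+\xi U)\ge p\log r(\gamma)$, a finite constant. So the integrand is dominated in absolute value by
\[
|p\log r(\gamma)|+\log\max(1,\xi)+\log(1+U),
\]
uniformly in $c\ge\gamma$. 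This gives finiteness of $\Lambda(c)$. (The assumption $U>0$ is not even needed for the lower bound, since $r^p>0$.)

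\textbf{Continuity.} The map $c\mapsto r(c)^p$ is continuous, so the integrand $\log(r(c)^p+\xi U)$ is continuous in $c$ pointwise in $U$. Dominated convergence with the uniform bound above then gives continuity of $\Lambda$.

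\textbf{Strict monotonicity.} The map $c\mapsto r(c)=1-\tau/(c+\tau)$ is strictly increasing. Since $p>0$ and $\log$ is strictly increasing, for $\gamma\le c<c'$ we get
\[
\log\bigl(r(c)^p+\xi U\bigr)<\log\bigl(r(c')^p+\xi U\bigr)\quad\text{pointwise.}
\]
Both sides are integrable, and the difference is strictly positive almost surely, so its expectation is strictly positive. Therefore $\Lambda(c)<\Lambda(c')$.

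\textbf{Limit as $c\to\infty$.} As $c\to\infty$ we have $r(c)^p\nearrow1$. Monotone convergence (or dominated convergence with the same bound) yields
\[
\Lambda(c)\to\E\log(1+\xi U).
\]
This limit is strictly positive because $\log(1+\xi U)>0$ almost surely, since $U>0$ almost surely and $\xi>0$.

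\textbf{Main obstacle.} The step needing the most care is the integrable domination. Its key observations are that $r(c)^p$ is bounded below by $r(\gamma)^p>0$, which controls the lower side, and that $\log(1+\xi U)$ is comparable to $\log(1+U)$, which controls the upper side.
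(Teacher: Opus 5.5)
Your proposal is correct and matches the paper's proof: the uniform bound $p\log r(\gamma)\le\log\bigl(r(c)^p+\xi U\bigr)\le\log\max\{1,\xi\}+\log(1+U)$ gives finiteness and an integrable dominating function, dominated convergence gives continuity and the limit, and pointwise strict monotonicity of $c\mapsto r(c)^p$ gives strict increase. Your parenthetical remark that $U>0$ is not needed for the lower bound is fine provided $U\ge 0$, which holds automatically since $U=\varepsilon_n^2$.
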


\begin{proof}
	Since \(r(c)\ge r(\gamma)>0\) for \(c\ge\gamma\),
	\[
	\log r(\gamma)^p
	\le
	\log\left(r(c)^p+\xi U\right).
	\]
	Also, because \(r(c)^p\le1\),
	\[
	\log\left(r(c)^p+\xi U\right)
	\le
	\log(1+\xi U)
	\le
	\log\!\bigl(\max\{1,\xi\}\bigr)+\log(1+U).
	\]
	Thus \(\Lambda(c)\) is finite for every \(c\ge\gamma\), and there
	exists a finite constant \(K\) such that
	\[
	\left|
	\log\left(r(c)^p+\xi U\right)
	\right|
	\le
	K+\log(1+U),
	\qquad c\ge\gamma.
	\]
	
	Let \(c_n\to c\).  Since \(r\) is continuous,
	\[
	\log\left(r(c_n)^p+\xi U\right)
	\longrightarrow
	\log\left(r(c)^p+\xi U\right)
	\qquad\text{almost surely}.
	\]
	The preceding integrable bound and the dominated convergence theorem
	therefore imply
	\[
	\Lambda(c_n)\longrightarrow\Lambda(c).
	\]
	Hence \(\Lambda\) is continuous.
	
	Next, if \(\gamma\le c_1<c_2\), then
	\[
	r(c_1)^p<r(c_2)^p.
	\]
	Consequently,
	\[
	\log\left(r(c_1)^p+\xi U\right)
	<
	\log\left(r(c_2)^p+\xi U\right)
	\qquad\text{almost surely},
	\]
	and hence
	\[
	\Lambda(c_1)<\Lambda(c_2).
	\]
	Thus \(\Lambda\) is strictly increasing.
	
	Finally, \(r(c)^p\to1\) as \(c\to\infty\).  Another application of
	the dominated convergence theorem gives
	\[
	\Lambda(c)
	\longrightarrow
	\E\left[\log(1+\xi U)\right].
	\]
	Since \(\xi>0\) and \(U>0\) almost surely,
	\[
	\log(1+\xi U)>0
	\qquad\text{almost surely},
	\]
	so the limiting expectation is strictly positive.
\end{proof}

\begin{lemma}[Uniform large-$x$ bound for the \(\log(1+X)\)-drift]
	\label{lem:X-farfield}
	Assume \(U>0\) almost surely and \(\E\log(1+U)<\infty\).  For every
	\(c_{\max}>\gamma\) and every \(\varepsilon>0\), there exists
	\(M_\varepsilon<\infty\) such that
	\begin{equation}
		\Delta_X(x,c)
		\le
		\Lambda(c)+\varepsilon
		\qquad
		\text{for all }x>M_\varepsilon
		\text{ and all }c\in[\gamma,c_{\max}].
		\label{eq:X-farfield}
	\end{equation}
\end{lemma}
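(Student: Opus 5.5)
We write the one-step increment explicitly. Starting from \((X_0,c_0)=(x,c)\) with \(z=(\mu+x)U\), we have \(X_1=xr(c)^p+\xi(\mu+x)U\), so
\[
\log(1+X_1)-\log(1+x)
=
\log\left(\frac{x r(c)^p+\xi(\mu+x)U}{1+x}+\frac{1}{1+x}\right).
\]
The plan is to bound the argument of the logarithm from above by \(r(c)^p+\xi U\) times a factor tending to one, plus a small additive error, uniformly in \(c\in[\gamma,c_{\max}]\). Concretely, \(x/(1+x)\le1\), and for \(x\ge\mu\) we have \((\mu+x)/(1+x)\le \max\{1,\mu\}\). More precisely, \((\mu+x)/(1+x)=1+(\mu-1)/(1+x)\le 1+\delta_x\) with \(\delta_x:=|\mu-1|/(1+x)\to0\). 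Hence
\[
\frac{x r(c)^p+\xi(\mu+x)U+1}{1+x}
\le
r(c)^p+\xi U(1+\delta_x)+\frac1{1+x}
\le (1+\delta_x)\bigl(r(c)^p+\xi U\bigr)+\frac1{1+x}.
\]

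Taking logarithms, we use \(\log(a(1+\delta)+b)\le\log a+\log(1+\delta)+b/a\) for \(a>0\). Here \(a=r(c)^p+\xi U\ge r(\gamma)^p>0\) uniformly in \(c\ge\gamma\). This gives
\[
\log(1+X_1)-\log(1+x)\le \log\bigl(r(c)^p+\xi U\bigr)+\log(1+\delta_x)+\frac{1}{(1+x)r(\gamma)^p},
\]
pathwise and for all \(c\in[\gamma,c_{\max}]\) (indeed for all \(c\ge\gamma\)). The right-hand side is integrable by Lemma~\ref{lem:Lambda-properties}, so taking expectations yields
\[
\Delta_X(x,c)\le\Lambda(c)+\log(1+\delta_x)+\frac{1}{(1+x)r(\gamma)^p}.
\]
The error terms are deterministic, independent of \(c\), and tend to zero as \(x\to\infty\). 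Choosing \(M_\varepsilon\) so that their sum is below \(\varepsilon\) for \(x>M_\varepsilon\) proves \eqref{eq:X-farfield}.

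The only delicate point is the uniformity in \(c\) and the integrability of \(\Delta_X\). Uniformity is handled by the lower bound \(r(c)^p\ge r(\gamma)^p\), which keeps \(a\) away from zero. Integrability follows because \(\log(1+X_1)\le \log(1+x+\xi(\mu+x)U)\), which is finite in mean under \(\E\log(1+U)<\infty\). Because of this, the restriction to \(c\le c_{\max}\) is not actually needed for the upper bound. If one prefers a limit argument instead, one could apply dominated convergence with the bound \(K+\log(1+U)\), then use Dini's theorem on the compact interval \([\gamma,c_{\max}]\), since the convergence is monotone in suitable terms. However, the explicit pathwise bound above avoids this.
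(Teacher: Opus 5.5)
Your proof is correct and follows essentially the same route as the paper. Both arguments bound $\log\{(1+X_1)/(1+x)\}$ pathwise by $\log(r(c)^p+\xi U)$ plus a deterministic error that vanishes as $x\to\infty$, make this uniform in $c$ via $r(c)^p\ge r(\gamma)^p$, and then take expectations. The paper collects the error into a single term $\log(1+\theta_x)$ with $\theta_x=1/\{(1+x)r(\gamma)^p\}+(\mu-1)_+/(1+x)$, whereas you split it as $\log(1+\delta_x)+1/\{(1+x)r(\gamma)^p\}$. You are also right that neither version actually uses the restriction $c\le c_{\max}$.
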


\begin{proof}
	Fix \(c_{\max}>\gamma\).  From \eqref{eq:X_update}, for
	\((X_0,c_0)=(x,c)\),
	\[
	X_1
	=
	xr(c)^p+\xi(\mu+x)U.
	\]
	Hence
	\[
	\frac{1+X_1}{1+x}
	=
	A(x,c)+B(x)U,
	\]
	where
	\[
	A(x,c):=\frac{1+xr(c)^p}{1+x},
	\qquad
	B(x):=\frac{\xi(\mu+x)}{1+x}.
	\]
	Write
	\[
	A(x,c)=r(c)^p+a_x(c),
	\qquad
	B(x)=\xi+b_x,
	\]
	where
	\[
	a_x(c):=\frac{1-r(c)^p}{1+x},
	\qquad
	b_x:=\frac{\xi(\mu-1)}{1+x}.
	\]
	For \(c\in[\gamma,c_{\max}]\),
	\[
	0\le a_x(c)\le\frac{1}{1+x},
	\qquad
	|b_x|\le\frac{\xi|\mu-1|}{1+x}.
	\]
	Thus both error terms vanish uniformly in \(c\in[\gamma,c_{\max}]\).
	
	Since \(r(c)^p\ge r(\gamma)^p>0\) and \(U>0\) almost surely,
	\[
	\frac{a_x(c)+b_xU}{r(c)^p+\xi U}
	\le
	\frac{a_x(c)}{r(c)^p}
	+
	\frac{(b_x)_+U}{\xi U}
	\le
	\frac{1}{(1+x)r(\gamma)^p}
	+
	\frac{(b_x)_+}{\xi}
	=:\theta_x,
	\]
	where \(\theta_x\to0\) as \(x\to\infty\).  Since
	\[
	A(x,c)+B(x)U
	=
	r(c)^p+\xi U+a_x(c)+b_xU
	>0,
	\]
	we may write
	\[
	\log\{A(x,c)+B(x)U\}
	-
	\log\{r(c)^p+\xi U\}
	=
	\log\left(
	1+
	\frac{a_x(c)+b_xU}{r(c)^p+\xi U}
	\right).
	\]
	The fraction inside the logarithm is bounded above by \(\theta_x\), and
	the argument of the logarithm is positive.  Therefore
	\[
	\log\{A(x,c)+B(x)U\}
	-
	\log\{r(c)^p+\xi U\}
	\le
	\log(1+\theta_x).
	\]
	Taking expectations gives
	\[
	\Delta_X(x,c)
	\le
	\Lambda(c)+\log(1+\theta_x),
	\]
	uniformly over \(c\in[\gamma,c_{\max}]\).  Choose \(M_\varepsilon\) such
	that \(\log(1+\theta_x)\le\varepsilon\) for all \(x>M_\varepsilon\).
	This proves \eqref{eq:X-farfield}.
\end{proof}

Figure~\ref{fig:X-farfield} illustrates the large-\(x\) bound in the
Gaussian case \(U\sim\chi^2(1)\).  Each light curve shows
\(\Delta_X(x,\cdot)\) for a fixed \(x\).  For large \(x\), the curves lie
arbitrarily close to the limit bound
\(
\Lambda(c)=\E\log\{r(c)^p+\xi U\},
\)
uniformly on compact \(c\)-intervals.

\begin{figure}[t]
	\centering
	\includegraphics[width=0.95\linewidth]{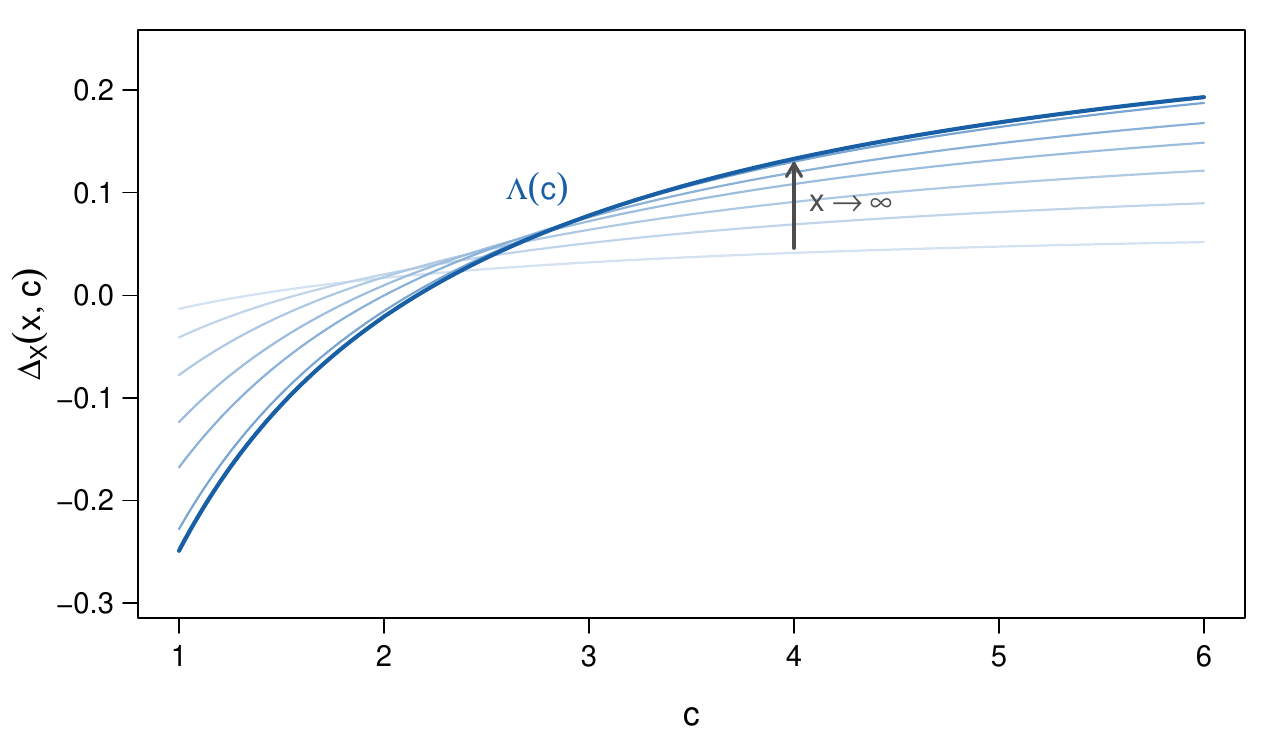}
	\caption{The log-drift \(\Delta_X(x, c)\) and limit $\Lambda(c)$ for \(U\sim\chi^2(1)\),
		with \(\mu=0.01\), \(p=1.2\), \(\tau=1\), \(\xi=0.5\), \(\gamma=1\).}
	\label{fig:X-farfield}
\end{figure}

By Lemmas~\ref{lem:c-envelope} and
\ref{lem:Lambda-properties}, both \(S\) and \(\Lambda\) are continuous
on \([\gamma,\infty)\).  Hence, for every fixed \(\lambda>0\), the joint
large-\(x\) drift function
\[
c\longmapsto \Lambda(c)+\lambda S(c)
\]
is continuous on \([\gamma,\infty)\).

\begin{assumption}[Joint stability]
	\label{ass:joint-stability}
	There exists \(\lambda>0\) such that
	\begin{equation}
		J(\lambda) := \sup_{c\ge\gamma}
		\left\{ \Lambda(c)+\lambda S(c) \right\} <0.
		\label{eq:joint-stability}
	\end{equation}
\end{assumption}

Figure~\ref{fig:joint-stability} illustrates the assumption in the
Gaussian case \(U\sim\chi^2(1)\).  Each curve is
\(c\mapsto\Lambda(c)+\lambda S(c)\) for a fixed \(\lambda\).  As
\(\lambda\) grows, the peak of the curve slides toward smaller \(c\):
a too-small \(\lambda\) leaves the curve above zero at large \(c\), a
too-large \(\lambda\) leaves it above zero at small \(c\), and an
intermediate \(\lambda\) keeps the entire curve below zero.  For such a
\(\lambda\) the supremum over \(c\) is negative, which is exactly
Assumption~\ref{ass:joint-stability}.

\begin{figure}[t]
	\centering
	\includegraphics[width=0.95\linewidth]{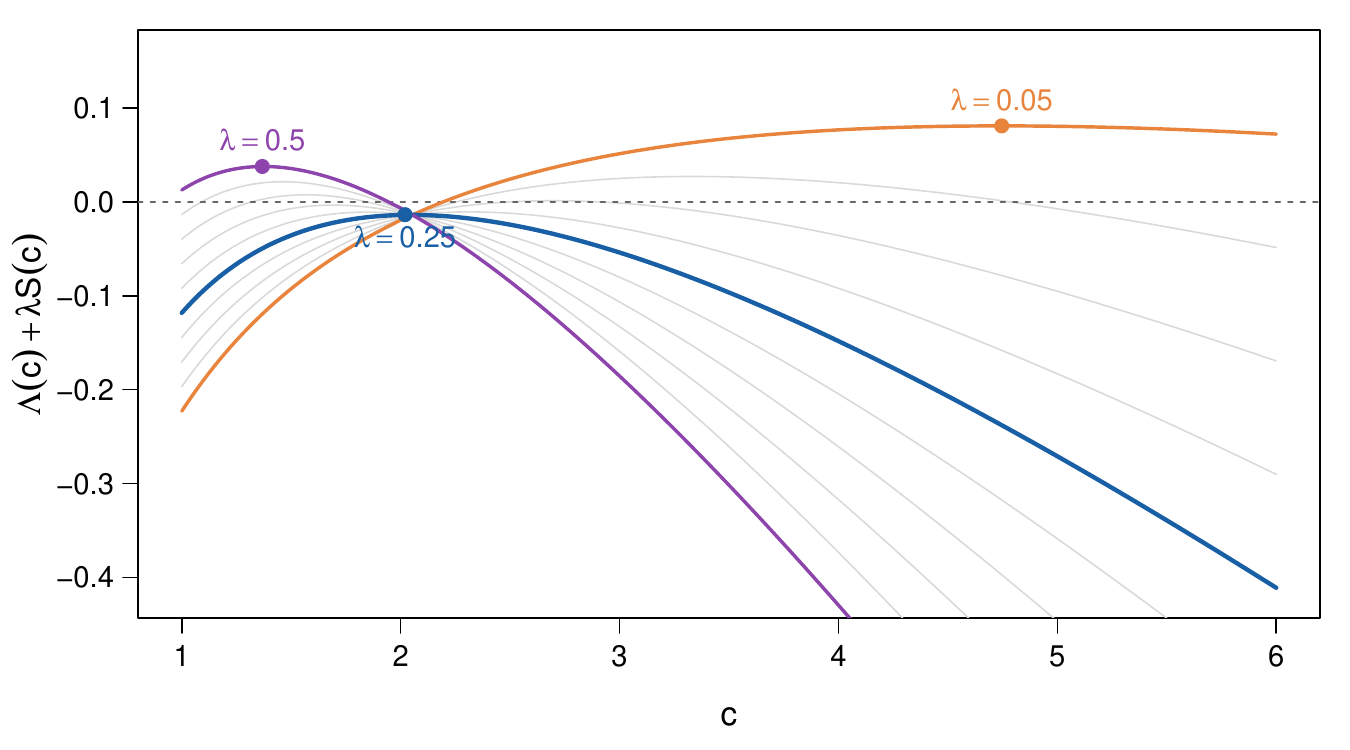}
	\caption{The joint stability quantity \(\Lambda(c)+\lambda S(c)\) for
		several \(\lambda\), with \(U\sim\chi^2(1)\), \(\mu=0.01\),
		\(p=1.2\), \(\tau=1\), \(\xi=0.5\), \(\gamma=1\).}
	\label{fig:joint-stability}
\end{figure}

\begin{proposition}[Joint Foster--Lyapunov condition]
	\label{prop:joint-FL}
	Assume \(U>0\) almost surely, \(\E\log(1+U)<\infty\), and
	Assumption~\ref{ass:joint-stability} with the corresponding
	\(\lambda>0\).  Define
	\begin{equation}
		V(x,c):=\log(1+x)+\lambda c.
		\label{eq:joint-V}
	\end{equation}
	Then there exist \(\delta>0\), \(b<\infty\), \(M<\infty\), and
	\(\bar c>\gamma\) such that
	\begin{equation}
		\E_{x,c}\left[V(X_1,c_1)\right]-V(x,c)
		\le
		-\delta+b\,\mathbf 1_C(x,c),
		\qquad (x,c)\in\mathcal S,
		\label{eq:joint-FL}
	\end{equation}
	where
	\[
	C=[0,M]\times[\gamma,\bar c].
	\]
\end{proposition}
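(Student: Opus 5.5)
The drift splits as
\[
\E_{x,c}[V(X_1,c_1)]-V(x,c)=\Delta_X(x,c)+\lambda\Delta_c(x,c).
\]
I will bound this separately in three regions: large \(c\); large \(x\) with bounded \(c\); and the compact remainder \(C\). Set \(\delta:=-J(\lambda)/2>0\).

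\textbf{Step 1: large \(c\).} First I need a bound on \(\Delta_X\) that is uniform over the whole state space. From \(X_1=xr(c)^p+\xi(\mu+x)U\) we get
\[
\frac{1+X_1}{1+x}\le 1+\xi\max\{1,\mu\}U,
\]
hence
\[
\Delta_X(x,c)\le K_0:=\E\log\bigl(1+\xi\max\{1,\mu\}U\bigr)<\infty
\]
by \(\E\log(1+U)<\infty\). By Lemma~\ref{lem:c-envelope}, \(\Delta_c(x,c)\le S(c)\) and \(S(c)\to-\infty\). So I can choose \(\bar c>\gamma\) such that \(K_0+\lambda S(c)\le-\delta\) for all \(c>\bar c\). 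The drift is then at most \(-\delta\) on \(\{c>\bar c\}\), uniformly in \(x\ge0\).

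\textbf{Step 2: large \(x\), bounded \(c\).} Apply Lemma~\ref{lem:X-farfield} with \(c_{\max}=\bar c\) and \(\varepsilon=\delta\). This gives \(M\) such that, for \(x>M\) and \(c\in[\gamma,\bar c]\),
\[
\Delta_X+\lambda\Delta_c\le\Lambda(c)+\delta+\lambda S(c)\le J(\lambda)+\delta=-\delta.
\]
This is the only place Assumption~\ref{ass:joint-stability} enters. It works because the envelope \(S\) bounds \(\Delta_c\) uniformly in \(x\), so the two limits combine.

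\textbf{Step 3: the set \(C=[0,M]\times[\gamma,\bar c]\).} On \(C\) it suffices that the drift be bounded above by some finite \(b-\delta\). This holds because \(\Delta_X\le K_0\) and \(\Delta_c\le\tau\) by Lemma~\ref{lem:c-lower-bound}, so I take \(b:=K_0+\lambda\tau+\delta\). I also need \(\E_{x,c}V(X_1,c_1)\) to be finite, so that the difference is meaningful. This follows from the same bounds, since \(c_1\le c+\tau\) and \(\log(1+X_1)\le\log(1+x)+\log(1+\xi\max\{1,\mu\}U)\).

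\textbf{Main obstacle.} The delicate point is the ordering of choices. \(\bar c\) must be fixed first, using only the \(x\)-uniform bounds \(K_0\) and \(S\). Only afterwards is \(M\) chosen from Lemma~\ref{lem:X-farfield} on the compact interval \([\gamma,\bar c]\); that lemma gives uniformity only on compact \(c\)-ranges, which is why this order is forced. The other thing to check is the uniform bound on \(\Delta_X\), and it is routine.
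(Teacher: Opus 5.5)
Your proof is correct and follows essentially the same approach as the paper: the same decomposition, the same uniform bound on $\Delta_X$, $\bar c$ fixed first from $S(c)\to-\infty$, then $M$ from Lemma~\ref{lem:X-farfield} on $[\gamma,\bar c]$, and a crude bound on $C$. The differences are cosmetic. You take $\delta=-J(\lambda)/2$ and tune $\bar c$ to give $-\delta$, where the paper uses $\min\{1,-J(\lambda)/2\}$ with drift $\le -1$ for $c>\bar c$. On $C$ you bound $\Delta_c\le\tau$ directly from Lemma~\ref{lem:c-lower-bound}, and your integrability check for $\E_{x,c}V(X_1,c_1)$ makes explicit a point the paper leaves implicit.
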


\begin{proof}
	The one-step drift of \(V\) decomposes as
	\begin{equation}
		\E_{x,c}[V(X_1,c_1)]-V(x,c)
		=
		\Delta_X(x,c)+\lambda\Delta_c(x,c).
		\label{eq:V-decomp}
	\end{equation}
	By Lemma~\ref{lem:c-envelope},
	\begin{equation}
		\Delta_X(x,c)+\lambda\Delta_c(x,c)
		\le
		\Delta_X(x,c)+\lambda S(c).
		\label{eq:drift-via-S}
	\end{equation}
	
	First, \(\Delta_X\) is uniformly bounded above.  From
	\eqref{eq:X_update},
	\[
	X_1
	=
	xr(c)^p+\xi(\mu+x)U
	\le
	x+\xi(\mu+x)U.
	\]
	Therefore
	\[
	\frac{1+X_1}{1+x}
	\le
	1+\xi\max(1,\mu)U,
	\]
	and hence
	\begin{equation}
		\Delta_X(x,c)
		\le
		\E\left[
		\log\left(1+\xi\max(1,\mu)U\right)
		\right]
		=:M_D<\infty.
		\label{eq:MD}
	\end{equation}
	
	Let \(\lambda>0\) be the constant from
	Assumption~\ref{ass:joint-stability}, and set
	\[
	\varepsilon:=-\frac{J(\lambda)}{2}>0.
	\]
	Since \(S(c)\to-\infty\), choose \(\bar c>\gamma\) such that
	\begin{equation}
		\lambda S(c)\le -(M_D+1)
		\qquad\text{for all }c>\bar c.
		\label{eq:cbar}
	\end{equation}
	With \(c_{\max}=\bar c\), Lemma~\ref{lem:X-farfield} gives
	\(M:=M_\varepsilon<\infty\) such that
	\begin{equation}
		\Delta_X(x,c)\le\Lambda(c)+\varepsilon
		\qquad
		\text{for all }x>M
		\text{ and all }c\in[\gamma,\bar c].
		\label{eq:X-large-region}
	\end{equation}
	
	We now consider three regions.  If \(c>\bar c\), then by
	\eqref{eq:drift-via-S}, \eqref{eq:MD}, and \eqref{eq:cbar},
	\[
	\Delta_X(x,c)+\lambda\Delta_c(x,c)
	\le
	M_D+\lambda S(c)
	\le
	-1.
	\]
	If \(c\le \bar c\) and \(x>M\), then by
	\eqref{eq:drift-via-S}, \eqref{eq:X-large-region}, and
	Assumption~\ref{ass:joint-stability},
	\[
	\Delta_X(x,c)+\lambda\Delta_c(x,c)
	\le
	\Lambda(c)+\varepsilon+\lambda S(c)
	\le
	J(\lambda) + \varepsilon =
	\frac{J(\lambda)}{2} <0.
	\]
	Finally, if \(c\le \bar c\) and \(x\le M\), then
	\eqref{eq:MD} and \(S(c)\le\tau\) give
	\[
	\Delta_X(x,c)+\lambda\Delta_c(x,c) \le
	M_D+\lambda\tau =:b_0.
	\]
	The set
	\[
	C=[0,M]\times[\gamma,\bar c]
	\]
	is compact in \(\mathcal S\).  Set
	\[
	\delta
	:=
	\min\left\{1,-\frac{J(\lambda)}{2}\right\}>0,
	\qquad
	b:=b_0+\delta.
	\]
	Outside \(C\), either \(c>\bar c\) or \(x>M\) with \(c\le \bar c\), so the
	drift is bounded above by \(-\delta\).  On \(C\), the drift is bounded
	above by \(b_0=-\delta+b\).  Therefore
	\[
	\E_{x,c}[V(X_1,c_1)]-V(x,c)
	\le
	-\delta+b\,\mathbf 1_C(x,c),
	\]
	for all \((x,c)\in\mathcal S\).
\end{proof}

\subsection{Diagnostic condition}
\label{sec:joint-stability-interpretation}
	
Assumption~\ref{ass:joint-stability} is the large-$x$ drift condition
for the Lyapunov function
\[
V(x,c)=\log(1+x)+\lambda c.
\]
As \(x\to\infty\), the corresponding drift is controlled by
\[
\Lambda(c)+\lambda S(c),
\qquad
\Lambda(c)
=
\E\log\{r(c)^p+\xi U\},
\]
where \(S(c)\) is the upper envelope for the large-$x$ drift of the
scale coordinate.  Thus the joint stability condition requires a
single \(\lambda>0\) such that
\[
\sup_{c\ge\gamma}\{\Lambda(c)+\lambda S(c)\}<0.
\]

By Lemma~\ref{lem:c-envelope}, \(S\) is continuous,
\(S(\gamma)>0\), and \(S(c)\to-\infty\) as \(c\to\infty\).
Hence by \eqref{eq:c*}, \(c^*<\infty\), \(S(c^*)=0\), and
\[
S(c)\ge0
\qquad\text{for all }c\in[\gamma,c^*].
\]

This threshold gives a useful necessary diagnostic for the joint stability
condition.  Indeed, if Assumption~\ref{ass:joint-stability} holds, then
for every \(c\in[\gamma,c^*]\),
\[
\Lambda(c)+\lambda S(c)<0.
\]
Since \(S(c)\ge0\) on this interval, it follows that
\[
\Lambda(c)<0
\qquad\text{for all }c\in[\gamma,c^*].
\]
By Lemma~\ref{lem:Lambda-properties}, \(\Lambda\) is strictly increasing
on \([\gamma,\infty)\).  Consequently, the strongest necessary
diagnostic on \([\gamma,c^*]\) is obtained at \(c=c^*\). 
Thus Assumption~\ref{ass:joint-stability}
implies
\begin{equation}
	\Lambda(c^*)
	=
	\E\left[\log(\rho^*+\xi U)\right]
	<0,
	\qquad
	\rho^*
	:=
	\left(\frac{c^*}{c^*+\tau}\right)^p.
	\label{eq:diagnostic-log-condition}
\end{equation}

Condition~\eqref{eq:diagnostic-log-condition} is therefore a necessary
condition for Assumption~\ref{ass:joint-stability}.  It is the analogue
of the usual GARCH log-moment condition
\[
\E[\log(\beta+\alpha U)]<0,
\]
with the fixed persistence coefficient replaced by the endogenous
coefficient \(\rho^*\) evaluated at the critical scale \(c^*\).  We do not
claim that this diagnostic condition is sufficient for
Assumption~\ref{ass:joint-stability}; sufficiency would require uniform
negativity of
\[
\Lambda(c)+\lambda S(c)
\]
over all \(c\ge\gamma\) for a single \(\lambda>0\).  Nevertheless, in the
Gaussian simulations below, this diagnostic boundary is numerically very
close to the boundary obtained by directly checking the joint stability
condition.

Figure~\ref{fig:stationarity} shows the zero-level boundary of
\eqref{eq:diagnostic-log-condition} computed on a
\((\gamma,\xi)\)-grid for \(p=1.2\), \(\tau=1\), and
\(U\sim\chi^2(1)\).
For comparison, we also computed the zero boundary of the joint stability
value
\[
J_{\min}(\gamma,\xi) :=
\inf_{\lambda>0}\sup_{c\ge\gamma}
\{\Lambda(c)+\lambda S(c)\}
\]
on the same grid.  The diagnostic and joint criteria give identical
stable/unstable classifications at all \(8100\) grid points.  Their
boundary locations differ by at most \(3.6\times10^{-4}\) when compared
as \(\xi=\xi(\gamma)\), and by at most \(7.3\times10^{-5}\) when compared
as \(\gamma=\gamma(\xi)\).

\begin{figure}[t]
\centering
\includegraphics[width=0.9\linewidth]{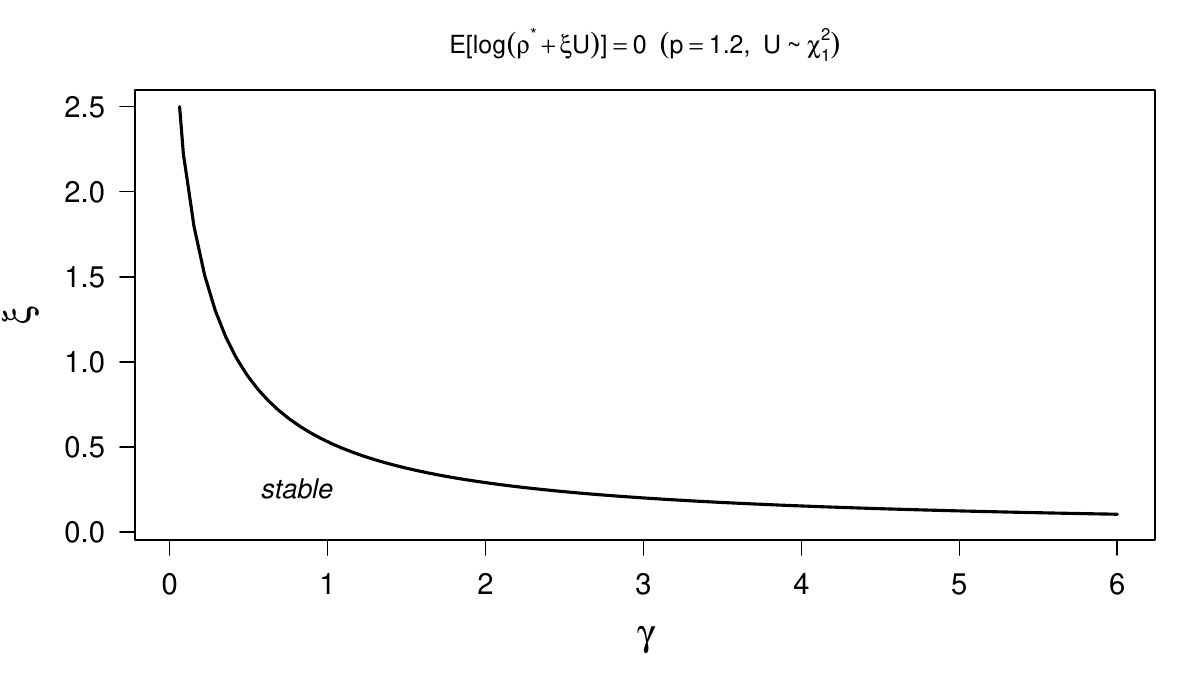}
\caption{Diagnostic stability boundary in the \((\gamma,\xi)\)-plane
	for \(p=1.2\), \(\tau=1\), and \(U\sim\chi^2(1)\).}
\label{fig:stationarity}
\end{figure}

\section{Control model and regularity}
\label{sec:regularity}

We introduce the deterministic control model associated with the
transition kernel of the Markov chain.  For notational clarity, define
the control map
\[
\Phi((x,c),u)
:=
\bigl(
\Psi_x(x,c,(\mu+x)u),
\Psi_c(x,c,(\mu+x)u)
\bigr),
\qquad (x,c)\in\mathcal S,\ u>0.
\]
Thus \(u\) plays the role of the standardized squared innovation \(U_n\),
whereas \((\mu+x)u\) is the corresponding squared innovation
\(\eta_n^2\).

Let
\[
O_U:=\{u>0:f_U(u)>0\}.
\]
When \(f_U\) is strictly positive on \((0,\infty)\), we have
\(O_U=(0,\infty)\).
The associated deterministic control model is
\[
\mathbf X_n=\Phi(\mathbf X_{n-1},u_n),
\qquad u_n\in O_U,\quad n\ge1.
\]
Equivalently, if \(\mathbf X_n=(X_n,c_n)\), then
\begin{align}
	X_n
	&=
	X_{n-1}r(c_{n-1})^p
	+
	\xi(\mu+X_{n-1})u_n,
	\label{eq:cm-X}\\
	\frac{X_n}{c_n}
	&=
	\frac{X_{n-1}}{c_{n-1}}r(c_{n-1})^{p+1}
	+
	\frac{\xi}{\gamma}(\mu+X_{n-1})u_n.
	\label{eq:cm-Y}
\end{align}

For \(\mathbf x\in\mathcal S\), define its forward control orbit by
\[
A_+(\mathbf x)
:=
\{\mathbf x\}
\cup
\left\{
\Phi^{(N)}(\mathbf x;u_1,\dots,u_N):
N\ge1,\ (u_1,\dots,u_N)\in O_U^N
\right\},
\]
where \(\Phi^{(N)}\) denotes the \(N\)-fold composition of the control map.

The regularity argument below follows the standard control-model
approach to irreducibility and petiteness for nonlinear state-space
Markov chains; see \citet[Chapter~7]{meyn2009markov} and
\citet{ChotardAuger2019}.  The next lemma shows that, in the present
model, the control model has the boundary point \((0,\gamma)\) as a
globally attracting anchor.  The subsequent local minorization is then a
boundary-anchor analogue of the usual rank-condition-to-small-set
argument.

\begin{lemma}[Attraction to the boundary anchor]
	\label{lem:attractor}
	Assume that \(0\in\overline{O_U\cap(0,\infty)}\), that is, positive
	admissible controls can be chosen arbitrarily close to zero.  Then, for
	every \((x_0,c_0)\in\mathcal S\),
	\[
	(0,\gamma)\in\overline{A_+((x_0,c_0))}.
	\]
	Equivalently, for every relatively open neighborhood \(N\) of
	\((0,\gamma)\) in \(\mathcal S\), there exist \(n\ge1\) and controls
	\((u_1,\ldots,u_n)\in O_U^n\) such that
	\[
	\Phi^{(n)}((x_0,c_0);u_1,\ldots,u_n)\in N.
	\]
\end{lemma}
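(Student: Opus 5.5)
We drive the control model with a sequence of small controls \(u_n=u\in O_U\), \(u\downarrow0\), and show that both coordinates approach the anchor. The natural plan has two phases: first make \(X\) small, then make \(c\) close to \(\gamma\). The convex-combination identity \eqref{eq:c-harmonic-combination} in the proof of Lemma~\ref{lem:c-lower-bound} is the main tool for the second phase.

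\emph{Phase 1 (shrinking \(X\)).} Along controls \(u_1,\dots,u_n\), Lemma~\ref{lem:c-lower-bound} gives \(c_k\le c_0+k\tau\), so \(r(c_k)^p\le (1+\tau/(c_0+k\tau))^{-p}\). By \eqref{eq:cm-X},
\[
X_k\le X_{k-1}\bigl(r(c_{k-1})^p+\xi u_k\bigr)+\xi\mu u_k .
\]
The deterministic decay \(\prod_k (1+\tau/(c_0+k\tau))^{-p}\) behaves like \(k^{-p}\to0\). Choosing \(u_k\) tiny, for instance with \(\sum_k u_k\) and \(\sum_k u_k\prod(\cdot)\) arbitrarily small, keeps the perturbation product bounded. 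Hence for fixed \(\epsilon\) there exist \(n\) and controls with \(X_n<\epsilon\). The scale \(c_n\) may then be large, at most of order \(c_0+n\tau\).

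\emph{Phase 2 (pulling \(c\) to \(\gamma\) with \(X\) small).} From \eqref{eq:c-harmonic-combination}, with \(A=X r(c)^p\) and \(B=\xi(\mu+X)u\), we have \(1/c_{\rm new}=w/(c+\tau)+(1-w)/\gamma\), where \(w=A/(A+B)\). If \(X\) is already tiny, a single control \(u\) with \(\xi\mu u\gg X\) but \(u\) still small makes \(w\approx0\). Then \(c_{\rm new}\approx\gamma\), while \(X_{\rm new}\le X+\xi(\mu+X)u\) remains small. Concretely, given a target neighborhood \([0,\eta)\times[\gamma,\gamma+\eta)\), first pick \(u\) with \(\xi(\mu+1)u<\eta/2\). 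Next apply Phase 1 to get \(X_n<\min\{\eta/2,\ \theta\,\xi\mu u\}\), where \(\theta\) is small enough that \(w\le\theta/(1+\theta)\). With \(c_n+\tau\ge\gamma\) this gives \(1/c_{n+1}\ge (1-w)/\gamma\), i.e.\ \(c_{n+1}\le\gamma/(1-w)<\gamma+\eta\). The order of choices matters: \(u\) is fixed before Phase 1, and the bound on \(w\) does not depend on how large \(c_n\) is, because the weight \(1/(c+\tau)\) only helps.

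The main obstacle is Phase 1. We must make sure that small positive controls do not prevent \(X\) from decaying, uniformly enough that the required threshold \(\theta\xi\mu u\) (which depends on the already-fixed \(u\)) is reachable. Since \(\mu>0\) injects at least \(\xi\mu u_k\) each step, Phase 1 must use controls much smaller than the final \(u\). The controls \(u_k\) in Phase 1 are independent of the final \(u\), and \(0\in\overline{O_U}\) lets us take them as small as needed. I would first try \(u_k\equiv\upsilon\) constant and tiny. Then \(X_k\le \pi_k x_0 + \xi\upsilon(\mu+\sup_j X_j)\sum_{j\le k}\pi_{j,k}\), with \(\pi\) the decay products. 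These sums grow at most linearly in \(k\), and since \(p>1\) the decay products are summable. Fixing \(k\) large and then \(\upsilon\to0\) gives \(X_k<\epsilon\), which closes the argument.
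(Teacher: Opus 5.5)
Your proposal is correct and follows essentially the same route as the paper's proof: first drive $X$ small with tiny admissible controls while $c$ is allowed to grow, then apply one corrective control $u$, fixed before the first phase, with $X$ pushed below a small multiple of $\xi\mu u$, so the new-shock weight dominates and $c$ lands near $\gamma$ while $X$ stays small. The differences are cosmetic: for Phase~1 you use an explicit unrolled bound (the $\sup_j X_j$ should be taken over $j\le k$, where $\mu+X_j\le(\mu+x_0)(1+\xi\upsilon)^{k}$) instead of the paper's continuity argument from the formal zero-control path, which also removes the need for the paper's preliminary reduction to $x_0>0$, and in Phase~2 you bound $c$ via the harmonic-combination identity instead of the paper's direct estimate $c'/\gamma-1\le x/(\xi\mu\delta)$.
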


\begin{proof}
	See Appendix~\ref{app:attractor-proof}.
\end{proof}

For \(z=(x_0,c_0)\in\mathcal S\) and positive controls, we denote by
\[
\mathbf J_3(z;a_1;u_2,u_3)
\quad\text{and}\quad
\mathbf J_4(z;a_1,a_2;u_3,u_4)
\]
the \(2\times2\) Jacobian matrices, respectively, of the maps
\[
(u_2,u_3)\mapsto \Phi^{(3)}(z;a_1;u_2,u_3)
\quad\text{and}\quad
(u_3,u_4)\mapsto \Phi^{(4)}(z;a_1,a_2;u_3,u_4).
\]
Thus, in each case, the Jacobian is taken with respect to the last two
control variables, while the preceding controls and the initial state are
held fixed.

For \(i\le j\), write \(a_{i:j}:=(a_i,\ldots,a_j)\) and
let $\mathcal S^\circ$ be the interior of $\mathcal S$, 
i.e., $\mathcal S^\circ=(0,\infty)\times(\gamma,\infty).$

\begin{lemma}[Anchor rank condition]
	\label{lem:anchor-rank}
	At least one of the following alternatives holds.
	\begin{enumerate}
		\item There exist controls \(a_{1:3}\in(0,\infty)^3\) such that
		\[
		\Phi^{(3)}((0,\gamma);a_1;a_2,a_3)
		\in\mathcal S^\circ
		\quad\text{and}\quad
		\det \mathbf J_3((0,\gamma);a_1;a_2,a_3)\ne0.
		\]
		
		\item There exist controls \(a_{1:4}\in(0,\infty)^4\) such that
		\[
		\Phi^{(4)}((0,\gamma);a_1,a_2;a_3,a_4)
		\in\mathcal S^\circ
		\quad\text{and}\quad
		\det \mathbf J_4((0,\gamma);a_1,a_2;a_3,a_4)\ne0.
		\]
	\end{enumerate}
\end{lemma}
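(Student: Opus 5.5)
The plan is to work in the coordinates \((X,Y)\) with \(Y:=X/c\) rather than \((X,c)\). In these coordinates the control recursion \eqref{eq:cm-X}--\eqref{eq:cm-Y} is affine in the current control \(u_n\). The map \((X,c)\mapsto(X,X/c)\) is a real-analytic diffeomorphism of \((0,\infty)\times(\gamma,\infty)\) onto its image. So nonvanishing of \(\det\mathbf J_3\) or \(\det\mathbf J_4\) is equivalent to nonvanishing of the corresponding Jacobian in \((X,Y)\) coordinates, once the endpoint lies in \(\mathcal S^\circ\). (I will take the endpoint-in-\(\mathcal S^\circ\) condition as the standing hypothesis when invoking this equivalence.)

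\emph{Step 1: the first step and the interior condition.} Starting from \((0,\gamma)\), the first control gives \(X_1=\xi\mu a_1>0\), and \eqref{Eq:Psi_c} gives \(c_1=\gamma\) exactly. Hence \(z_1:=(\xi\mu a_1,\gamma)\) has positive level but lies on the boundary \(c=\gamma\). After any further positive control, the harmonic-combination identity \eqref{eq:c-harmonic-combination} has \(A=X_{n-1}r(c_{n-1})^p>0\). This forces \(1/c_n<1/\gamma\), so \(c_n>\gamma\). Moreover \(X_n>0\) trivially. Thus \(\Phi^{(3)}\) and \(\Phi^{(4)}\) from \((0,\gamma)\) land in \(\mathcal S^\circ\) for all positive controls, and the interior requirement is automatic. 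It remains to exhibit a nonzero Jacobian determinant.

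\emph{Step 2: reduce the rank condition to one scalar function.} From \(z_1=(x_1,\gamma)\), write \(\rho:=r(\gamma)^p\). Then
\[
X_2=x_1\rho+\xi(\mu+x_1)u_2,\qquad Y_2=\tfrac{x_1}{\gamma}\rho\,r(\gamma)+\tfrac{\xi}{\gamma}(\mu+x_1)u_2,
\]
with \(c_2=X_2/Y_2\), and
\[
X_3=X_2r(c_2)^p+\xi(\mu+X_2)u_3,\qquad Y_3=Y_2r(c_2)^{p+1}+\tfrac{\xi}{\gamma}(\mu+X_2)u_3.
\]
The \(u_3\)-column of the Jacobian is \(\xi(\mu+X_2)(1,1/\gamma)^\top\). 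Hence
\[
\det=\xi(\mu+X_2)\Bigl[\partial_{u_2}X_3/\gamma-\partial_{u_2}Y_3\Bigr].
\]
This determinant, call it \(D_3(a_1,a_2,a_3)\), is real-analytic on \((0,\infty)^3\). So it is either identically zero or nonzero on an open dense set. Alternative 1 therefore reduces to showing \(D_3\not\equiv0\). To do this, I would evaluate a convenient limit, e.g. \(u_3\downarrow0\) and then \(u_2\downarrow0\), where \(c_2\to\gamma+\tau\). Expanding there, I expect the leading term to be a nonzero multiple of the difference between the \(u_2\)-sensitivities of \(X_2r(c_2)^p\) and \(\gamma Y_2r(c_2)^{p+1}\). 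These carry different powers of \(r\), so generically they do not cancel. By continuity, a nonzero limit gives a nonzero value at nearby strictly positive controls.

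\emph{Step 3: fallback to alternative 2.} If the bracket in Step 2 turns out to vanish identically for some parameter values (this is the main obstacle), I would pass to \(\mathbf J_4\). There the extra control \(a_2\) moves the starting point \(z_2\) into the interior with \(c_2>\gamma\). The same column reduction then gives \(D_4=\xi(\mu+X_3)[\partial_{u_3}X_4/\gamma-\partial_{u_3}Y_4]\). Because the starting scale is now strictly above \(\gamma\), the two sensitivities involve \(r(c_2)\) and \(r(c_3)\) at distinct arguments. I would show \(D_4\not\equiv0\) by the same analytic-limit argument, or by a direct expansion in \(a_1\to0\) where the state is close to \((0,\gamma+\tau)\)-type configurations. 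The lemma's disjunctive form is exactly designed to absorb a possible identical degeneracy of \(D_3\). The delicate point is a clean verification that the leading coefficient in the chosen limit is nonzero for all admissible \((\mu,\gamma,\xi,p,\tau)\). I would attempt this by computing the derivative of \(r(c)^p\) in \(c\) and using \(p>1\) together with the strict inequality \(r(c_2)<1\).
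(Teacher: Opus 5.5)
\textbf{There is a genuine gap: no determinant is ever shown to be nonzero.} Your Step~1 is correct. The strict convex-combination argument shows that every endpoint after the first positive control lies in $\mathcal S^\circ$, which is cleaner than the paper's continuity argument. The column reduction and the analyticity dichotomy are also sound. But nonvanishing of the determinant is the entire content of the lemma, and you never establish it.

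The limit you propose in Step~2 ($a_3\downarrow0$, then $a_2\downarrow0$, from $(\xi\mu a_1,\gamma)$) is exactly the paper's zero-control two-step Jacobian. By Lemma~\ref{lem:two-step-jacobian} it is a nonzero multiple of $\mathcal Q(\gamma)=2p\tau^2-\gamma^2-3\gamma\tau$. This vanishes for admissible parameters, e.g.\ $\gamma=\tau=1$, $p=2$. So your expectation that ``different powers of $r$ generically do not cancel'' is not merely unproved; it fails on a whole hypersurface of parameters.

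Step~3 has the same problem in sharper form. Nothing is computed there, and appeals to $c_2>\gamma$, $r(c_2)<1$ and $p>1$ cannot suffice. The reason is that the two-step Jacobian from any $(x_0,c_0)$ with $x_0>0$ degenerates exactly when the intermediate scale equals the unique root in $(\gamma,\infty)$ of the quadratic $q$ defined below, wherever the chain starts. What the paper supplies, and you are missing, is the explicit formula $\det\mathbf J_0(x_0,c_0)=(\text{positive factor})\cdot\mathcal Q(c_0)$. It pairs this with the observation that $\mathcal Q(\gamma)=0$ forces $\mathcal Q(\gamma+\tau)=2\gamma(\gamma+2\tau)>0$. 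That identity is what makes the four-step fallback work: a small $a_2$ puts $c_2$ near $\gamma+\tau$.

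\textbf{Your own route can be finished, and it gives more.}
\begin{itemize}
\item In your bracket the $u_3$-terms cancel, so it equals $\gamma^{-1}\partial_{u_2}\psi$ with $\psi:=X_2r(c_2)^p\{1-\gamma/(c_2+\tau)\}$.
\item As $a_2$ runs over $(0,\infty)$, $c_2$ decreases strictly from $\gamma+\tau$ to $\gamma$. Eliminating $u_2$ in favour of $c_2$ gives $\psi=K\{c_2/(c_2+\tau)\}^{p+1}(c_2+\tau-\gamma)/(c_2-\gamma)$, with $K=x_1r(\gamma)^p\tau/(\gamma+\tau)>0$.
\item The logarithmic derivative of $\psi$ in $c_2$ is $\tau q(c_2)/\{c_2(c_2+\tau)(c_2+\tau-\gamma)(c_2-\gamma)\}$, where $q(c):=(p+1)(c-\gamma)(c-\gamma+\tau)-c(c+\tau)$.
\item Hence $D_3\ne0$ if and only if $q(c_2)\ne0$. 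The identity $q(\gamma+\tau)=\mathcal Q(\gamma)$ recovers the paper's degeneracy.
\item Since $q(\gamma)=-\gamma(\gamma+\tau)<0$, taking $a_2$ large yields alternative~1 for all admissible parameters. This is the opposite limit to the one you chose, and the disjunction in the lemma is then not even needed.
\end{itemize}
Without some such explicit computation, however, the proposal does not prove the lemma.
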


\begin{proof}
	See Appendix~\ref{app:det-rank}.
\end{proof}

The next lemma turns the anchor rank condition into a local minorization.
The full-rank dependence on the last two controls creates a two-dimensional
density on an interior ball, while lower semicontinuity and strict
positivity of \(f_U\) give a uniform positive lower bound on compact
control sets.

\begin{lemma}[Anchor minorization]
	\label{lem:anchor-minorization}
	Assume that \(f_U\) is lower semicontinuous and strictly positive on
	\((0,\infty)\).  Then there exist a relatively open neighborhood
	\(N_0\) of \((0,\gamma)\) in \(\mathcal S\), an integer
	\(m\in\{3,4\}\), a nonempty open ball
	\[
	B\Subset\mathcal S^\circ
	\]
	and a constant \(\delta>0\) such that
	\begin{equation}
		P^m(z,A)
		\ge
		\delta\,\operatorname{Leb}(A\cap B),
		\qquad
		z\in N_0,\quad A\in\mathcal B(\mathcal S).
		\label{eq:anchor-minorization}
	\end{equation}
\end{lemma}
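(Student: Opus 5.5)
We will combine Lemma~\ref{lem:anchor-rank} with a continuity and change-of-variables argument. Fix the alternative from Lemma~\ref{lem:anchor-rank} that holds, say with $m\in\{3,4\}$ and controls $a_{1:m}$, and write $k=m-2$ for the number of leading controls held fixed. Define
\[
G(z,a_{1:k},v_1,v_2):=\Phi^{(m)}(z;a_{1:k};v_1,v_2),
\]
which is continuous, indeed $C^1$, in all arguments on $\mathcal S\times(0,\infty)^m$. This holds because $\Psi_x,\Psi_c$ are rational functions of $(x,c,y)$ with strictly positive denominators once $y>0$, and $r(c)$ is smooth for $c\ge\gamma$. In particular the maps extend smoothly to a neighborhood of the boundary point $(0,\gamma)$, so the Jacobian $\mathbf J_m$ depends continuously on $(z,a_{1:k},v_1,v_2)$.

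\textbf{Step 1 (uniform local diffeomorphism).} At the reference point $z_0=(0,\gamma)$ and controls $a_{1:m}$ we have $\det\mathbf J_m\ne0$ and $G\in\mathcal S^\circ$. By continuity, choose $\epsilon>0$ and a relatively open $N_0\ni(0,\gamma)$ such that the following hold for all $z\in \overline{N_0}$ and all leading controls $b_{1:k}$ in a compact box $K_1=\prod_i[a_i-\epsilon,a_i+\epsilon]\subset(0,\infty)^k$, with $(v_1,v_2)$ in the box $K_2=[a_{m-1}-\epsilon,a_{m-1}+\epsilon]\times[a_m-\epsilon,a_m+\epsilon]$:
\[
|\det\mathbf J_m|\ge j_0>0,
\qquad
\|D_{(v_1,v_2)}G\|\le L,
\qquad
G\in\mathcal S^\circ.
\]
The key requirement is a ball $B$ that is covered by the image $G(z,b_{1:k},K_2)$ uniformly in $(z,b_{1:k})$. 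We will obtain it from a quantitative inverse function theorem: a $C^1$ map with Jacobian bounded below and derivative uniformly continuous on a compact set is injective on a small ball $K_2'\subset K_2$ around $(a_{m-1},a_m)$, and its image contains a ball of radius $\rho$ about the image of the center, with $\rho$ uniform. Shrinking $N_0$ and $K_1$ so that the image center moves by less than $\rho/2$, the ball $B$ of radius $\rho/2$ about $G(z_0,a_{1:k},a_{m-1},a_m)$ lies in every image. This uniformity step is the main obstacle. The cleanest route is probably to apply the inverse function theorem to the augmented map $(z,b,v)\mapsto(z,b,G(z,b,v))$ at the reference point, which gives a local diffeomorphism whose image contains a product neighbourhood.

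\textbf{Step 2 (density bound).} By the Markov property,
\[
P^m(z,A)=\int \mathbf 1_A\bigl(\Phi^{(m)}(z;u_{1:m})\bigr)\prod_{i=1}^m f_U(u_i)\,\D u_i ,
\]
and we restrict the integral to $u_{1:k}\in K_1$ and $(u_{m-1},u_m)\in K_2'$. Since $f_U$ is lower semicontinuous and strictly positive, it attains a positive minimum $f_*$ on the compact set $K_1\cup\pi_1K_2'\cup\pi_2K_2'$. For fixed $(z,b_{1:k})$, the change of variables $w=G(z,b,v)$ on the injectivity region gives
\[
\int_{K_2'}\mathbf 1_A(G)\,f_*^2\,\D v\ \ge\ f_*^2\int_{A\cap B}|\det\mathbf J_m|^{-1}\,\D w\ \ge\ f_*^2 L^{-2}\operatorname{Leb}(A\cap B),
\]
using $|\det\mathbf J_m|\le 2L^2$ up to a constant. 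Integrating over $b\in K_1$ yields \eqref{eq:anchor-minorization} with $\delta=f_*^{m}\operatorname{Leb}_k(K_1)/(2L^2)$.

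Finally, $B\Subset\mathcal S^\circ$ because $B$ is a small ball around an interior point, and we shrink $\rho$ if necessary to keep its closure interior.
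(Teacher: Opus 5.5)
Your proposal is correct and follows essentially the same route as the paper. You pick the alternative from Lemma~\ref{lem:anchor-rank} and use a parameterized inverse function theorem to obtain a ball $B\Subset\mathcal S^\circ$, covered injectively by the last two controls uniformly in $z$ near $(0,\gamma)$ and in the leading controls. You then bound $|\det\mathbf J_m|$ above and $f_U$ below on compact control boxes, change variables, and integrate over the leading controls, giving $\delta=f_*^m\operatorname{Leb}_k(K_1)/(2L^2)$ in place of the paper's $\alpha^m J_0^{-1}\operatorname{Leb}_{m-2}(V_0)$. Your explicit handling of the uniformity step (via the augmented map or a quantitative inverse function theorem) and of the smooth extension across the corner $(0,\gamma)$ fills in details the paper only asserts.
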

\begin{proof}
	By Lemma~\ref{lem:anchor-rank}, choose \(m\in\{3,4\}\) and positive
	controls \(a_{1:m}\in(0,\infty)^m\) such that
	\[
	\Phi^{(m)}((0,\gamma);a_{1:m})\in\mathcal S^\circ 
	\quad \text{and} \quad
	\det \mathbf J_m((0,\gamma);a_{1:m-2};a_{m-1:m})\ne0,
	\]
	where \(\mathbf J_m\) denotes \(\mathbf J_3\) when \(m=3\) and
	\(\mathbf J_4\) when \(m=4\).
	
	For \(v=(u_1,\ldots,u_{m-2})\) and
	\(q=(u_{m-1},u_m)\), write
	\[
	\Phi^{(m)}(z;v,q)
	:=
	\Phi^{(m)}(z;u_1,\ldots,u_m).
	\]
	By the parameterized inverse function theorem and continuity of the
	Jacobian matrix with respect to the initial state and the control
	variables, there exist a relatively open neighborhood \(N_0\) of
	\((0,\gamma)\), compact control neighborhoods
	\[
	V_0\Subset (0,\infty)^{m-2},
	\qquad
	V_1\Subset (0,\infty)^2,
	\]
	containing \(a_{1:m-2}\) and \(a_{m-1:m}\) in their respective interiors,
	and a nonempty open ball
	\[
	B\Subset\mathcal S^\circ
	\]
	such that, for every \(z\in N_0\) and every \(v\in V_0\), the map
	\[
	q\mapsto \Phi^{(m)}(z;v,q),
	\qquad q\in V_1,
	\]
	is one-to-one on \(V_1\) and its image contains the same ball \(B\).
	We may take \(V_0\) and \(V_1\) to be the closures of open rectangles;
	in particular,
	\[
	\operatorname{Leb}_{m-2}(V_0)>0.
	\]
	Moreover, by continuity of the Jacobian determinant and compactness of the relevant
	sets, there exists \(J_0<\infty\) such that
	\[
	|\det \mathbf J_m(z;v;q)|\le J_0
	\]
	on the relevant preimage of \(B\).
	
	Since \(V_0\times V_1\) is a compact subset of
	\((0,\infty)^m\), all coordinate projections of \(V_0\times V_1\) are compact
	subsets of \((0,\infty)\).
	Hence, by lower semicontinuity and strict positivity of \(f_U\), there exists
	\(\alpha>0\) such that
	\[
	f_U(u_j)\ge\alpha
	\qquad
	\text{for all }(u_1,\ldots,u_m)\in V_0\times V_1
	\text{ and }j=1,\ldots,m.
	\]
	
	For fixed \(z\in N_0\) and \(v\in V_0\), since
	\(B\subset \Phi^{(m)}(z;v,V_1)\) and the map
	\(q\mapsto \Phi^{(m)}(z;v,q)\) is one-to-one on \(V_1\), the
	change-of-variables formula for \(y=\Phi^{(m)}(z;v,q)\) gives
	\[
	\begin{aligned}
		\int_{V_1}
		\mathbf 1_A\left(\Phi^{(m)}(z;v,q)\right)\,\D q
		\ge
		\int_{A\cap B}
		\frac{1}
		{\left|\det \mathbf J_m(z;v;q(y))\right|}
		\,\D y  \ge
		J_0^{-1}\operatorname{Leb}(A\cap B),
	\end{aligned}
	\]
	where \(q(y)\in V_1\) is the unique point satisfying
	\[
	\Phi^{(m)}(z;v,q(y))=y.
	\]
	Thus,
	\[
	\begin{aligned}
		P^m(z,A)
		&\ge
		\int_{V_0}
		\int_{V_1}
		\mathbf 1_A\left(\Phi^{(m)}(z;v,q)\right)
		\prod_{j=1}^m f_U(u_j) \,\D q\,\D v  \\
		&\ge
		\alpha^m \int_{V_0}\int_{V_1}
		\mathbf 1_A(\Phi^{(m)}(z;v,q)) \,\D q \,\D v \\
		&\ge
		\alpha^m J_0^{-1}\operatorname{Leb}_{m-2}(V_0)
		\operatorname{Leb}(A\cap B).
	\end{aligned}
	\]
	Therefore
	\eqref{eq:anchor-minorization} holds with
	\[
	\delta := \alpha^m J_0^{-1}\operatorname{Leb}_{m-2}(V_0)>0.
	\]
\end{proof}

\begin{lemma}[Feller property]
	\label{lem:feller}
	Assume that \(U\) has a Lebesgue density \(f_U\) on \((0,\infty)\).
	Then the Markov chain \((\mathbf X_n)_{n\ge0}\) is Feller on
	\(\mathcal S\).
\end{lemma}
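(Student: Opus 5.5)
The plan is to show that for every bounded continuous \(g:\mathcal S\to\mathbb R\), the function
\[
Pg(x,c)=\E\bigl[g(\Phi((x,c),U))\bigr]=\int_0^\infty g\bigl(\Phi((x,c),u)\bigr)f_U(u)\,\D u
\]
is continuous on \(\mathcal S\). The argument uses dominated convergence, and the density assumption enters only to ensure that the set of controls where continuity could fail has probability zero. Concretely, take a sequence \((x_k,c_k)\to(x,c)\) in \(\mathcal S\). It suffices to show that, for \(F_U\)-almost every \(u\),
\[
\Phi((x_k,c_k),u)\to\Phi((x,c),u).
\]
Since \(|g|\le\|g\|_\infty\) and \(f_U\) is integrable, dominated convergence then gives \(Pg(x_k,c_k)\to Pg(x,c)\).

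The key step is joint continuity of \(\Phi\) in \((x,c)\) for each fixed \(u>0\).

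\emph{First coordinate.} Here \(\Psi_x(x,c,(\mu+x)u)=x\,r(c)^p+\xi(\mu+x)u\). Since \(r(c)=c/(c+\tau)\) is continuous on \([\gamma,\infty)\), this coordinate is continuous in \((x,c)\).

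\emph{Second coordinate.} Here
\[
\Psi_c=\frac{x\,r(c)^p+\xi(\mu+x)u}{\dfrac{x}{c}\,r(c)^{p+1}+\dfrac{\xi}{\gamma}(\mu+x)u}.
\]
This is a ratio of continuous functions. The denominator is at least \(\xi\mu u/\gamma>0\) uniformly in \((x,c)\in\mathcal S\), because \(\mu>0\), \(u>0\) and \(x\ge0\). Hence the ratio is continuous at every point of \(\mathcal S\), including the boundary points \(x=0\) and \(c=\gamma\). Continuity of both coordinates gives convergence of \(\Phi((x_k,c_k),u)\) for every \(u>0\). Because \(U\) has a density on \((0,\infty)\), the set \(\{u\le0\}\) is \(F_U\)-null, so convergence holds almost surely, and then \(g\circ\Phi\) converges as well, since \(g\) is continuous.

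There is little real obstacle. The only point needing care is the boundary of \(\mathcal S\): at \(x=0\) the denominator of \(\Psi_c\) could vanish if \(u=0\). This is exactly excluded by \(\mu>0\) together with \(U>0\) almost surely (equivalently, \(F_U(\{0\})=0\), guaranteed by the density). I would remark that the argument actually proves the stronger statement that \(\Phi(\cdot,u)\) is continuous for each \(u>0\), so the chain is weak Feller. This is what is needed, together with Lemma~\ref{lem:anchor-minorization} and Lemma~\ref{lem:attractor}, to conclude that compact sets are petite (T-chain argument) in the subsequent Harris recurrence theorem.
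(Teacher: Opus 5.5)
Your proof is correct and follows the paper's argument. For each fixed $u>0$, the map $(x,c)\mapsto g(\Phi((x,c),u))$ is continuous because the denominator of $\Psi_c$ is bounded below by $\xi\mu u/\gamma>0$, and dominated convergence with the bound $\|g\|_\infty$ then gives continuity of $Pg$.
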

\begin{proof}
	Let \(g\in C_b(\mathcal S)\).  For \((x,c)\in\mathcal S\),
	\[
	Pg(x,c)
	=
	\int_0^\infty
	g\left(\Phi((x,c),u)\right) f_U(u)\,\D u.
	\]
	For each \(u>0\), the map
	\[
	(x,c)\mapsto g\left(\Phi((x,c),u)\right)
	\]
	is continuous on \(\mathcal S\), because the denominator 
	in the second component of \(\Phi\) is strictly positive.  
	Moreover, it is bounded
	by \(\|g\|_\infty\).  Therefore the dominated convergence theorem gives
	the continuity of \(Pg\).  Hence the chain is Feller.
\end{proof}

The next theorem assembles the regularity argument.  Attraction to the
boundary anchor gives pointwise access to the minorizing neighborhood,
and the local minorization on \(B\) yields irreducibility.  The Feller
property is used to make this access locally uniform, so that a finite
covering argument gives petiteness of compact sets.

\begin{theorem}[Regularity]
	\label{thm:regularity}
	Assume that \(f_U\) is lower semicontinuous and strictly positive on
	\((0,\infty)\).  Then the Markov chain \((\mathbf X_n)_{n\ge0}\) on
	\(\mathcal S\) is \(\phi\)-irreducible with
	\[
	\phi(A):=\operatorname{Leb}(A\cap B),
	\qquad
	A\in\mathcal B(\mathcal S),
	\]
	where \(B\) is as in Lemma~\ref{lem:anchor-minorization}.  In
	particular, the chain is \(\psi\)-irreducible.  Moreover, every compact
	subset of \(\mathcal S\) is petite.
\end{theorem}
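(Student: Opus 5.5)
The plan is to combine the attraction result (Lemma~\ref{lem:attractor}) with the anchor minorization (Lemma~\ref{lem:anchor-minorization}), using the Feller property (Lemma~\ref{lem:feller}) to make hitting probabilities locally uniform. Since $f_U>0$ on $(0,\infty)$, we have $O_U=(0,\infty)$, so Lemma~\ref{lem:attractor} applies. Let $N_0$, $m$, $B$, and $\delta$ be as in Lemma~\ref{lem:anchor-minorization}. Shrinking if needed, fix a relatively open neighborhood $N_1$ of $(0,\gamma)$ with $\overline{N_1}\subset N_0$.

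\textbf{Step 1 (positive hitting probability of $N_1$).} Fix $z\in\mathcal S$. By Lemma~\ref{lem:attractor} there are $n\ge1$ and controls $u_{1:n}\in(0,\infty)^n$ with $\Phi^{(n)}(z;u_{1:n})\in N_1$. The map $u\mapsto\Phi^{(n)}(z;u)$ is continuous on $(0,\infty)^n$, since each denominator in $\Psi_c$ is strictly positive. Hence there is an open box $W\ni u_{1:n}$ with $\Phi^{(n)}(z;W)\subset N_1$. Strict positivity of $f_U$ then gives
\[
P^n(z,N_1)\ge\int_W\prod_j f_U(u_j)\,\D u>0 .
\]

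\textbf{Step 2 ($\phi$-irreducibility).} By Chapman--Kolmogorov and \eqref{eq:anchor-minorization},
\[
P^{n+m}(z,A)\ge\int_{N_1}P^n(z,\D w)\,P^m(w,A)\ge\delta\,P^n(z,N_1)\operatorname{Leb}(A\cap B).
\]
This is positive whenever $\phi(A)>0$, so the chain is $\phi$-irreducible. The existence of a maximal irreducibility measure $\psi$ then follows from \citet{meyn2009markov}.

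\textbf{Step 3 (petiteness of compact sets).} This is the main obstacle, because the time $n=n(z)$ in Step~1 varies with $z$ and we need uniform lower bounds. For each $z$, the function $w\mapsto P^{n(z)}(w,N_1)$ is lower semicontinuous. This holds because $N_1$ is relatively open and the chain is Feller (Lemma~\ref{lem:feller}), so each $P^k$ is Feller too, and for open sets $P^k(\cdot,O)=\sup_g P^kg$ over continuous $0\le g\le\mathbf 1_O$. Alternatively, it can be seen directly from the control representation via Fatou's lemma. Hence there is an open $O_z\ni z$ with $P^{n(z)}(w,N_1)\ge\frac12P^{n(z)}(z,N_1)=:\epsilon_z>0$ for all $w\in O_z$.

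Given a compact $K$, cover it by finitely many sets $O_{z_1},\dots,O_{z_k}$ and set $\epsilon=\min_i\epsilon_{z_i}>0$. Define the sampling distribution $a$ to be uniform on the times $\{n(z_i)+m:i=1,\dots,k\}$. For $w\in K$, choose $i$ with $w\in O_{z_i}$. Then Step~2 gives
\[
K_a(w,A)\ge\frac1k P^{n(z_i)+m}(w,A)\ge\frac{\delta\epsilon}{k}\operatorname{Leb}(A\cap B).
\]
So $K$ is $\nu_a$-petite with $\nu_a=\frac{\delta\epsilon}{k}\operatorname{Leb}(\cdot\cap B)$, a nontrivial measure because $B$ is a nonempty open ball.

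The delicate points are the lower semicontinuity on the relatively open set $N_1$ at the boundary of $\mathcal S$, and ensuring that the boxes $W$ stay inside $(0,\infty)^n$. Both are handled by the continuity of $\Phi^{(n)}$ in $(z,u)$ on $\mathcal S\times(0,\infty)^n$.
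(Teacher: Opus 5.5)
Your proposal is correct and follows the paper's proof essentially step for step: anchor attraction plus continuity of the control map gives $P^n(\mathbf z,N_0)>0$, the anchor minorization then yields $\phi$-irreducibility, and Feller-induced lower semicontinuity combined with a finite subcover and a sampling distribution charging the times $n_j+m$ gives petiteness of compact sets. The auxiliary neighborhood $N_1$ with $\overline{N_1}\subset N_0$ is unnecessary, since the paper works directly with the relatively open set $N_0$, but it is harmless.
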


\begin{proof}
	Let \(N_0\), \(m\), \(B\), and \(\delta\) be as in
	Lemma~\ref{lem:anchor-minorization}, and let \(\phi\) be the measure
	defined in the statement.
	
	Since \(f_U\) is strictly positive on \((0,\infty)\), we have
	\(O_U=(0,\infty)\), and hence \(0\in\overline{O_U\cap(0,\infty)}\).
	Thus Lemma~\ref{lem:attractor} applies.
	
	We first prove \(\phi\)-irreducibility.  Let
	\(\mathbf z\in\mathcal S\).  By Lemma~\ref{lem:attractor}, for the
	neighborhood \(N_0\) of \((0,\gamma)\), there exists a finite
	admissible control path whose endpoint lies in \(N_0\).  Since the
	finite-step control map is continuous in the controls, the inverse image
	of \(N_0\) contains a nonempty open subset of the control space.  Since
	\(f_U\) is strictly positive on \((0,\infty)\), this open set has
	positive probability.  Hence
	\[
	P^n(\mathbf z,N_0)>0
	\]
	for some \(n\ge0\).  Combining this with
	Lemma~\ref{lem:anchor-minorization}, we obtain
	\[
	P^{n+m}(\mathbf z,A)
	\ge
	\delta P^n(\mathbf z,N_0)\operatorname{Leb}(A\cap B).
	\]
	Thus, if \(\phi(A)>0\), then \(P^{n+m}(\mathbf z,A)>0\).  Hence the
	chain is \(\phi\)-irreducible.  Therefore, by the existence of a maximal
	irreducibility measure, the chain is \(\psi\)-irreducible.
	
	It remains to show that compact sets are petite.  
	Let \(K\subset\mathcal S\) be compact.  
	For each \(\mathbf z\in K\), the reachability
	argument above gives an integer \(n_{\mathbf z}\) such that
	\[
	P^{n_{\mathbf z}}(\mathbf z,N_0)>0.
	\]
	By Lemma~\ref{lem:feller}, the chain is Feller, and hence \(P^n\) is
	Feller for every \(n\ge1\).
	Since \(N_0\) is relatively open in \(\mathcal S\), the map
	\[
	\mathbf y\mapsto P^{n_{\mathbf z}}(\mathbf y,N_0)
	\]
	is lower semicontinuous.  Hence there exist a neighborhood
	\(U_{\mathbf z}\) of \(\mathbf z\) and a constant \(r_{\mathbf z}>0\)
	such that
	\[
	P^{n_{\mathbf z}}(\mathbf y,N_0)\ge r_{\mathbf z},
	\qquad \mathbf y\in U_{\mathbf z}.
	\]
	Choose a finite subcover
	\[
	K\subset\bigcup_{j=1}^J U_{\mathbf z_j}.
	\]
	Set
	\[
	n_j:=n_{\mathbf z_j},
	\qquad
	r_j:=r_{\mathbf z_j}.
	\]
	Choose a sampling distribution \(a\) on \(\mathbb Z_+\) such that
	\[
	a(n_j+m)>0,
	\qquad j=1,\ldots,J.
	\]
	Then, for \(\mathbf y\in K\), choose \(j\) such that
	\(\mathbf y\in U_{\mathbf z_j}\).  We have
	\[
	\begin{aligned}
		K_a(\mathbf y,A)
		&:=
		\sum_{\ell\ge0}a(\ell)P^\ell(\mathbf y,A) \\
		&\ge
		a(n_j+m)P^{n_j+m}(\mathbf y,A) \\
		&\ge
		a(n_j+m)r_j\delta\,\operatorname{Leb}(A\cap B).
	\end{aligned}
	\]
	Therefore
	\[
	K_a(\mathbf y,A)
	\ge
	\varepsilon\,\phi(A),
	\qquad
	\mathbf y\in K,
	\]
	where
	\[
	\varepsilon
	:=
	\delta
	\min_{1\le j\le J}a(n_j+m)r_j
	>0.
	\]
	Thus \(K\) is petite.
\end{proof}

We can now state the main stability result of the paper.

\begin{theorem}[Positive Harris recurrence of \(\mathbf X_n\)]
	\label{thm:positive-harris}
	Assume the following conditions.
	\begin{enumerate}
		\item[\upshape(A1)] The normalized innovation \(U\) has a Lebesgue
		density \(f_U\) that is lower semicontinuous and strictly positive on
		\((0,\infty)\), satisfies \(\mathbb P(U>0)=1\), and is normalized by
		\(\mathbb E[U]=1\).
		
		\item[\upshape(A2)] Assumption~\ref{ass:joint-stability} holds.
	\end{enumerate}
	Then \(\mathbf X_n=(X_n,c_n)\) is positive Harris recurrent. In
	particular, it admits a unique invariant probability measure on
	\(\mathcal S\).
\end{theorem}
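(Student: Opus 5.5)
The plan is to assemble the pieces already established and invoke the Foster--Lyapunov criterion of Theorem~\ref{thm:FL} (the Meyn--Tweedie drift criterion for positive Harris recurrence). That criterion requires three ingredients: a \(\psi\)-irreducible chain, a petite set \(C\), and a function \(V\ge0\), finite everywhere, satisfying \(PV-V\le-\delta+b\mathbf 1_C\). Under these conditions it yields positive Harris recurrence and hence a unique invariant probability measure.

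\emph{Step 1: check the hypotheses of the earlier results.} From (A1), \(U>0\) almost surely. We also have \(\E\log(1+U)\le\E U=1<\infty\). Hence the hypotheses of Lemmas~\ref{lem:c-envelope}, \ref{lem:Lambda-properties} and~\ref{lem:X-farfield} hold. Together with (A2), this means Proposition~\ref{prop:joint-FL} applies.

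\emph{Step 2: the drift condition.} Proposition~\ref{prop:joint-FL} gives \(V(x,c)=\log(1+x)+\lambda c\) together with constants \(\delta,b,M,\bar c\). For these,
\[
PV-V\le-\delta+b\mathbf 1_C,\qquad C=[0,M]\times[\gamma,\bar c].
\]
On \(\mathcal S\) we have \(V\ge\lambda\gamma>0\), so \(V\) is nonnegative and finite. Note also that \(PV\le V+b<\infty\) everywhere, which is the finiteness of \(PV\) that some versions of the criterion require.

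\emph{Step 3: irreducibility and petiteness.} By (A1), \(f_U\) is lower semicontinuous and strictly positive on \((0,\infty)\). Theorem~\ref{thm:regularity} then gives \(\psi\)-irreducibility and shows that every compact subset of \(\mathcal S\) is petite. The set \(C\) is compact in \(\mathcal S\), since \(\mathcal S\) is closed in \(\mathbb R^2\) and \(C\) is a closed bounded rectangle contained in it. Hence \(C\) is petite.

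\emph{Step 4: conclude.} Applying Theorem~\ref{thm:FL} (compare Meyn--Tweedie, Theorem~11.3.4 and Theorem~10.4.4), the chain is positive recurrent. A positive recurrent \(\psi\)-irreducible chain has a unique invariant probability measure.

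The main obstacle is upgrading \emph{positive recurrence} to \emph{Harris} positive recurrence. Two routes are available:
\begin{enumerate}
\item If Theorem~\ref{thm:FL} is stated in the \((V4)\)/\((V2)\) form with \(V\) everywhere finite and \(C\) petite, Harris recurrence follows directly. The standard argument is that condition \((V2)\) with an everywhere finite \(V\) gives \(\mathbb P_z(\tau_C<\infty)=1\) for \emph{every} starting point \(z\), not merely for almost every one.
\item Otherwise, I would use the Feller property (Lemma~\ref{lem:feller}) together with the fact that \(\operatorname{supp}\phi=\overline B\) has nonempty interior. This makes the chain a \(T\)-chain (Meyn--Tweedie, Theorem~6.2.9), so compact sets are petite. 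Then \(V\), being coercive (\(V\to\infty\) as \(x\to\infty\) or \(c\to\infty\)), gives non-evanescence, and a non-evanescent \(T\)-chain is Harris (Theorem~9.2.2).
\end{enumerate}
I expect the first route to suffice, since \(V<\infty\) on all of \(\mathcal S\).
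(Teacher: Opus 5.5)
Your proposal is correct and follows the paper's proof essentially step for step: you use Theorem~\ref{thm:regularity} for \(\psi\)-irreducibility and petiteness of the compact set \(C\), Proposition~\ref{prop:joint-FL} (after checking \(\E\log(1+U)\le\E U=1\)) for the drift, and Theorem~\ref{thm:FL} to conclude. The only differences are cosmetic: the paper adds a constant to \(V\) so that \(V\ge1\), as Theorem~\ref{thm:FL} requires, whereas you only note \(V\ge\lambda\gamma>0\); and Theorem~\ref{thm:FL} as stated already yields positive \emph{Harris} recurrence, so your extra discussion of upgrading from positive recurrence is unnecessary, although both of your routes are valid.
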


\begin{proof}
	By Theorem~\ref{thm:regularity}, \(\mathbf X_n\) is
	\(\psi\)-irreducible, and every compact subset of \(\mathcal S\) is
	petite.
	
	Let \(\lambda>0\) be chosen so that (A2) holds.
	Since \(\log(1+t)\le t\) for \(t\ge0\), and since \(U\ge0\) and
	\(\mathbb E[U]=1\), we have
	\[
	\mathbb E \left[ \log(1+U) \right] \le \mathbb E[U]=1<\infty.
	\]
	Thus the hypotheses of Proposition~\ref{prop:joint-FL} are satisfied.
	Hence there exist \(\delta>0\), \(b<\infty\), \(M<\infty\), and
	\(\bar c>\gamma\) such that
	\[
	\mathbb E_{x,c}[V(X_1,c_1)]-V(x,c)
	\le
	-\delta + b\,\mathbf 1_C(x,c),
	\qquad (x,c)\in\mathcal S,
	\]
	where
	\[
	V(x,c)=\log(1+x)+\lambda c
	\]
	and
	\[
	C=[0,M]\times[\gamma,\bar c].
	\]
	Adding a constant to \(V\), if necessary, we may assume \(V\ge1\)
	without changing the drift inequality.
	
	The set \(C\) is compact in \(\mathcal S\), and therefore petite by
	Theorem~\ref{thm:regularity}.  The Foster--Lyapunov criterion,
	Theorem~\ref{thm:FL}, then implies that \(\mathbf X_n\) is positive Harris
	recurrent.  The existence and uniqueness of the invariant probability
	measure follow from positive Harris recurrence and
	\(\psi\)-irreducibility.
\end{proof}

\section{Long-memory behavior}
\label{sec:long-memory}

The preceding sections establish stability of the two-dimensional Markov
state process under the joint stability condition.  This section studies
the persistence generated by the model.

The persistence mechanism comes from the power-law kernel together with
the level-and-slope matching update.  Through slope matching, each shock
affects both the current variance level and the endogenous future decay
scale.  Thus the observable volatility proxy may display
long-memory-type behavior even though the augmented state remains
Markovian.

We measure this persistence in simulations using the local Whittle
estimator applied to log-squared innovations.

\subsection{Local Whittle estimation}
\label{sec:lw}

Let \(\{Y_t\}\) be a covariance-stationary process with spectral density
\(f_Y\).  We say that \(Y_t\) has fractional memory parameter
\(d\in(-1/2,1/2)\) if
\begin{equation}
	f_Y(\lambda)
	\sim
	C_y \lambda^{-2d},
	\qquad
	\lambda\to0^+,
	\label{eq:spec-pole}
\end{equation}
for some \(C_y>0\).  The range \(d\in(0,1/2)\) corresponds to long memory.

We measure low-frequency persistence using the local Whittle estimator of
\citet{robinson1995gaussian}.  For a demeaned sample
\(Y_1,\ldots,Y_n\), define the periodogram
\begin{equation}
	I(\lambda_j)
	=
	\frac{1}{2\pi n}
	\left|
	\sum_{t=1}^{n}Y_t \e^{-it\lambda_j}
	\right|^2,
	\qquad
	\lambda_j=\frac{2\pi j}{n}.
	\label{eq:periodogram}
\end{equation}
For a bandwidth \(m\), the local Whittle objective is
\begin{equation}
	Q(d)
	=
	\log\left(
	\frac1m
	\sum_{j=1}^{m}\lambda_j^{2d}I(\lambda_j)
	\right)
	-
	\frac{2d}{m}
	\sum_{j=1}^{m}\log\lambda_j,
	\label{eq:lw-obj}
\end{equation}
and
\[
\hat d
=
\argmin_{d\in\mathcal D} Q(d),
\qquad
\mathcal D=[-0.4,0.99].
\]
Under standard regularity conditions, if \(d_0\in(-1/2,1/2)\),
\(m\to\infty\), and \(m/n\to0\), then
\begin{equation}
	\sqrt m(\hat d-d_0)
	\Rightarrow
	N(0,1/4).
	\label{eq:lw-clt}
\end{equation}

In the simulations below, we apply the estimator to demeaned
log-squared innovations.  Values \(0<\hat d<1/2\) are interpreted in the
usual covariance-stationary long-memory range.  Values \(\hat d\ge1/2\)
are reported only as indicators of very strong low-frequency persistence,
not as estimates covered by the stationary local Whittle limit theory.

\subsection{Simulation design}
\label{sec:sim-design}

We apply the local Whittle estimator to demeaned log-squared innovations
\(\{\log \eta_n^2\}\), treating them as observable noisy proxies for
latent log-volatility.  This practice is standard in the long-memory
volatility literature; see, for example, \citet{Ding1993} and
\citet{Breidt1998}.

We set \(m=\lfloor n^{0.65}\rfloor\).  With \(n=20{,}000\), this gives
\(m=624\).  For each \((p,\xi)\) in the grid
\[
\{1.1,1.2,1.5,2.0\}
\times
\{0.10,0.25,0.50,0.75,1.00,1.10\},
\]
we compute the diagnostic boundary value \(\gamma^*\) defined by
\begin{equation}
	\ell(\gamma^*)=0,
	\qquad
	\ell(\gamma)
	:=
	\E\left[
	\log\{\rho^*(\gamma)+\xi U\}
	\right],
	\qquad
	U\sim\chi^2(1),
	\label{eq:lyap-zero}
\end{equation}
where
\begin{equation}
	\rho^*(\gamma)
	:=
	\left(
	\frac{c^*(\gamma)}
	{c^*(\gamma)+\tau}
	\right)^p.
	\label{eq:rho-gamma}
\end{equation}

Here \(c^*(\gamma)\) is the first zero-crossing threshold of the
scale-drift envelope \(S(c)\), with its dependence on \(\gamma\) made
explicit.  That is,
\[
c^*(\gamma):=\inf\{c\ge\gamma:S_\gamma(c)<0\},
\]
where \(S_\gamma(c)\) denotes the function \(S(c)\) from
Section~\ref{sec:joint-stability-interpretation} when \(\gamma\) is
treated as a varying parameter.  The envelope is evaluated using the
Gaussian closed-form expression in Corollary~\ref{cor:gaussian-S}.
Thus \(\gamma^*\) defines a diagnostic boundary for the simulations,
rather than the formal joint stability boundary.

We then simulate the model at
\begin{equation}
	\gamma=f\gamma^*,
	\qquad
	f\in\{0.95,0.90,0.80,0.70\}.
	\label{eq:gamma-factor}
\end{equation}
Values of \(f\) closer to one correspond to parameter values closer to
this diagnostic boundary.  For each parameter configuration, we generate
\(100\) independent simulated paths of length
\[
N=n+n_{\mathrm{burn}}=25{,}000,
\]
with \(n_{\mathrm{burn}}=5{,}000\) observations discarded and
\(n=20{,}000\) retained.  Throughout the simulations, we set
\(\tau=1\) and \(\mu=0.01\), and initialize the pseudo-random number
generator with a fixed seed for reproducibility.  The design therefore
contains \(4\times6\times4=96\) parameter configurations, with \(100\)
Monte Carlo replications for each configuration.  As a bandwidth
robustness check, we also repeat the estimation using
\(m=\lfloor n^a\rfloor\) for
\(a\in\{0.55,0.60,0.70\}\).

\subsection{Results}
\label{sec:results}

Table~\ref{tab:lw} reports the Monte Carlo mean local Whittle estimates
\(\overline{\hat d}_{\log\eta^2}\) for all parameter configurations.
The estimates indicate substantial low-frequency persistence in the
simulated log-squared innovations.

\begin{table}[t]
	\centering
	\small
	\caption{Monte Carlo mean local Whittle estimates
		\(\overline{\hat d}_{\log\eta^2}\) by stability factor
		\(f=\gamma/\gamma^*\), power-law decay exponent \(p\), and shock
		coefficient \(\xi\).  The column \(\gamma^*\) is the numerically
		computed boundary induced by the diagnostic log-moment condition
		\(\ell(\gamma^*)=0\), where
		\(\ell(\gamma)=\E\log\{\rho^*(\gamma)+\xi U\}\).
		Each entry is averaged over up to \(100\) successful replications.
		Bandwidth \(m=\lfloor n^{0.65}\rfloor=624\), sample size
		\(n=20{,}000\), \(\tau=1\), and \(\mu=0.01\).}
	\label{tab:lw}
	\begin{tabular}{ccrcccc}
		\toprule
		& & & \multicolumn{4}{c}{
			\(\overline{\hat d}_{\log\eta^{2}}\) at
			\(f=\gamma/\gamma^*\)} \\
		\cmidrule(lr){4-7}
		\(p\) & \(\xi\) & \(\gamma^*\)
		& \(0.95\) & \(0.90\) & \(0.80\) & \(0.70\) \\
		\midrule
		\multirow{6}{*}{\(1.1\)}
		& \(0.10\) & \(5.4245\) & 0.437 & 0.386 & 0.295 & 0.226 \\
		& \(0.25\) & \(2.0297\) & 0.546 & 0.482 & 0.349 & 0.241 \\
		& \(0.50\) & \(0.9144\) & 0.594 & 0.515 & 0.371 & 0.243 \\
		& \(0.75\) & \(0.5463\) & 0.610 & 0.535 & 0.405 & 0.284 \\
		& \(1.00\) & \(0.3631\) & 0.653 & 0.584 & 0.458 & 0.344 \\
		& \(1.10\) & \(0.3132\) & 0.665 & 0.605 & 0.484 & 0.378 \\
		\midrule
		\multirow{6}{*}{\(1.2\)}
		& \(0.10\) & \(6.2190\) & 0.440 & 0.394 & 0.301 & 0.228 \\
		& \(0.25\) & \(2.3488\) & 0.564 & 0.490 & 0.361 & 0.247 \\
		& \(0.50\) & \(1.0719\) & 0.608 & 0.533 & 0.378 & 0.250 \\
		& \(0.75\) & \(0.6483\) & 0.645 & 0.562 & 0.418 & 0.285 \\
		& \(1.00\) & \(0.4364\) & 0.680 & 0.606 & 0.464 & 0.359 \\
		& \(1.10\) & \(0.3786\) & 0.692 & 0.628 & 0.497 & 0.398 \\
		\midrule
		\multirow{6}{*}{\(1.5\)}
		& \(0.10\) & \(8.7224\) & 0.460 & 0.407 & 0.316 & 0.237 \\
		& \(0.25\) & \(3.3610\) & 0.602 & 0.525 & 0.380 & 0.259 \\
		& \(0.50\) & \(1.5760\) & 0.668 & 0.574 & 0.419 & 0.262 \\
		& \(0.75\) & \(0.9777\) & 0.696 & 0.621 & 0.451 & 0.307 \\
		& \(1.00\) & \(0.6756\) & 0.722 & 0.654 & 0.507 & 0.375 \\
		& \(1.10\) & \(0.5926\) & 0.748 & 0.676 & 0.536 & 0.407 \\
		\midrule
		\multirow{6}{*}{\(2.0\)}
		& \(0.10\) & \(13.1723\) & 0.485 & 0.425 & 0.326 & 0.245 \\
		& \(0.25\) & \(5.1741\)  & 0.629 & 0.556 & 0.403 & 0.273 \\
		& \(0.50\) & \(2.4885\)  & 0.721 & 0.622 & 0.441 & 0.274 \\
		& \(0.75\) & \(1.5797\)  & 0.746 & 0.665 & 0.485 & 0.330 \\
		& \(1.00\) & \(1.1172\)  & 0.782 & 0.700 & 0.552 & 0.387 \\
		& \(1.10\) & \(0.9892\)  & 0.797 & 0.709 & 0.568 & 0.405 \\
		\bottomrule
	\end{tabular}
\end{table}

For fixed \((p,\xi)\), the estimates are uniformly larger when \(f\) is
closer to one.  Thus, as \(\gamma=f\gamma^*\) approaches the diagnostic
boundary, the simulated volatility proxy displays stronger persistence.
The Monte Carlo mean increases monotonically from \(f=0.70\) to
\(f=0.95\) for every \((p,\xi)\) in the parameter grid.  Moreover, the
Monte Carlo standard errors range from approximately \(0.002\) to
\(0.006\), indicating that the observed pattern is not driven by
simulation variability.  The same monotone boundary-proximity pattern
is preserved under the alternative bandwidth choices.

This pattern is consistent with the endogenous decay mechanism.  When
the scale variable \(c_n\) is large, the coefficient
\[
\beta_n
=
\left(1+\frac{\tau}{c_n}\right)^{-p}
\]
is close to one, so past squared innovations decay slowly through the
random products in the ARCH representation.

The estimates also tend to increase with the shock coefficient \(\xi\),
particularly near the diagnostic boundary.  For example, at \(p=2.0\)
and \(f=0.95\), \(\overline{\hat d}_{\log\eta^2}\) rises from \(0.485\)
when \(\xi=0.10\) to \(0.797\) when \(\xi=1.10\).  The effect of \(p\)
is weaker and less uniform, but larger values of \(p\) are generally
associated with stronger persistence when \(\xi\) is moderate or large.
For example, at \(f=0.95\) and \(\xi=1.00\), the mean estimates are
\(0.653\), \(0.680\), \(0.722\), and \(0.782\) for
\(p=1.1\), \(1.2\), \(1.5\), and \(2.0\), respectively.

These results support the persistence mechanism of the model.  Slope
matching makes the decay scale \(c_n\) endogenous, so each shock affects
both the current volatility level and the future rate of decay.
Consequently, the observable volatility proxy can exhibit
fractional-type low-frequency persistence even though the augmented
state \((X_n,c_n)\) is finite-dimensional and Markovian.

\section{Empirical study}
\label{sect:empirical}

In this section, we examine whether the two-dimensional Markov model
reproduces the persistence observed in financial volatility and compare
its in-sample and out-of-sample performance with standard volatility
models.  We consider five international equity indices and Bitcoin.

\subsection{Data}
\label{sec:data}

We use daily open-to-close log returns and five-minute realized variance
from the Oxford--Man Institute Realized Library for the S\&P~500
(\texttt{.SPX}), FTSE~100 (\texttt{.FTSE}), DAX (\texttt{.GDAXI}),
Nikkei~225 (\texttt{.N225}), and KOSPI (\texttt{.KS11}), covering
January~2000 through June~2018.

For Bitcoin, we use daily returns and realized variance constructed from
one-minute Bitstamp data covering January~2012 through January~2025.
After removing non-trading days and observations with missing values,
the samples contain between \(4{,}501\) and \(4{,}754\) daily
observations.

Returns are used to estimate the return-based volatility models.
Realized variance is used as the ex-post volatility measure in the
out-of-sample evaluation.  
The HAR-RV model additionally uses lagged realized-variance measures as predictors.

\subsection{Gaussian Quasi-Maximum Likelihood Estimation}
\label{sec:qml}

We estimate the model by Gaussian quasi-maximum likelihood, using the
Gaussian density as a working conditional density with variance
\[
\sigma_n^2=\mu+X_{n-1}.
\]

Throughout the empirical analysis, we set \(\tau=1\) and fix the
power-law exponent at \(p=1.2\).  Profile-likelihood calculations
indicate weak separate identification of \(p\) and \(\gamma\).
Fixing \(p\) yields a parsimonious three-parameter specification and
maintains a common decay-kernel shape across markets.  Unreported
estimates with \(p=2.0\) produce qualitatively similar conclusions.

The estimated parameter vector is
\[
\theta=(\mu,\xi,\gamma)^\top.
\]
For each \(\theta\), the state variables
\((X_n(\theta),c_n(\theta))\) are generated recursively from
\eqref{eq:X_update}--\eqref{eq:Y_update}.  The Gaussian
quasi-log-likelihood, up to an additive constant, is
\begin{equation}
	\ell_N(\theta)
	=
	-\frac{1}{2}
	\sum_{n=1}^{N}
	\left[
	\log \sigma_n^2(\theta)
	+
	\frac{\eta_n^2}{\sigma_n^2(\theta)}
	\right],
	\label{eq:log-likelihood}
\end{equation}
where
\[
\sigma_n^2(\theta)=\mu+X_{n-1}(\theta).
\]
The Gaussian quasi-maximum likelihood estimator is
\begin{equation}
	\hat\theta
	=
	\argmax_{\theta\in\Theta}
	\ell_N(\theta),
	\label{eq:qmle}
\end{equation}
where
\[
\Theta
=
\left\{
(\mu,\xi,\gamma)\in\mathbb R^3:
\mu>0,\ \xi>0,\ \gamma>0
\right\}.
\]

For each fitted model, we compute the diagnostic frontier \(\hat\gamma^*\) using the fitted value \(\hat\xi\), with \(p=1.2\). 
Following \eqref{eq:gamma-factor}, we define the fitted stability factor by \( \hat f = \hat\gamma/\hat\gamma^*.\) 
Values closer to one indicate greater proximity to the diagnostic stability boundary, while \(\hat f<1\) places the fitted model inside that boundary.

\begin{table}[tbp]
	\centering
	\caption{Gaussian quasi-maximum-likelihood estimates of the
		LM-GARCH model with \(p=1.2\) and \(\tau=1\).
		The stability factor
		\(\hat f=\hat\gamma/\hat\gamma^*\) measures proximity to the
		diagnostic stability boundary.  The column \(\log L\) reports
		the Gaussian quasi-log-likelihood.}
	\label{tab:base_main}
	\begin{tabular}{lrrrrr}
		\toprule
		Asset
		& \(\hat\mu\)
		& \(\hat\xi\)
		& \(\hat\gamma\)
		& \(\hat f\)
		& \(\log L\) \\
		\midrule
		S\&P 500
		& \(8.58\times10^{-6}\)
		& 0.115
		& 4.86
		& 0.898
		& 19600.5 \\
		FTSE 100
		& \(1.17\times10^{-5}\)
		& 0.121
		& 4.41
		& 0.865
		& 19456.4 \\
		DAX
		& \(1.39\times10^{-5}\)
		& 0.097
		& 5.70
		& 0.883
		& 18947.8 \\
		Nikkei 225
		& \(1.90\times10^{-5}\)
		& 0.140
		& 3.66
		& 0.836
		& 18550.6 \\
		KOSPI
		& \(8.37\times10^{-6}\)
		& 0.091
		& 6.26
		& 0.905
		& 18934.9 \\
		Bitcoin
		& \(3.32\times10^{-4}\)
		& 0.200
		& 2.20
		& 0.732
		& 13644.5 \\
		\bottomrule
	\end{tabular}
\end{table}

The estimated stability factors range from \(0.732\) to \(0.905\).
The five equity indices have values between \(0.836\) and \(0.905\),
whereas Bitcoin has a lower value of \(0.732\).  Among the fitted
models, KOSPI lies closest to the diagnostic boundary and Bitcoin lies
furthest from it.

\subsection{The Long-Memory Property}
\label{sec:longmemory}

We assess whether the fitted two-dimensional Markov model reproduces
the empirical long-memory behavior of financial volatility using three
diagnostics.

First, we estimate the memory parameter \(d\) from the data using the
local Whittle estimator applied to log-squared
returns.  The estimates range from \(0.350\) to \(0.434\) across the six
assets, lying within the stationary long-memory region
\(0<d<1/2\) and well above zero.

Second, we examine whether the fitted model reproduces the estimated
memory of each market.  For each asset, we simulate independent
trajectories from the fitted model, apply the same local Whittle
estimator to the simulated log-squared returns, and average the
resulting estimates across replications.

The model-implied estimate is nearly identical to the data-based
estimate for the Nikkei~225.  For the S\&P~500 and FTSE~100, the
differences are close to one Monte Carlo standard error.  For the DAX,
KOSPI, and Bitcoin, the fitted model produces lower memory estimates
than those observed in the data, with similar differences of about
\(0.045\).

\begin{table}[tbp]
	\centering
	\caption{Long-memory estimates at \(p=1.2\).
		\(\hat d_{\mathrm{data}}\) is the local Whittle estimate from
		log-squared returns, and \(\hat d_{\mathrm{model}}\) is the
		average estimate from independent simulations of the fitted
		model.  Monte Carlo standard errors are reported in
		parentheses.}
	\label{tab:dvalid}
	\begin{tabular}{lrrr}
		\toprule
		Asset & \(\hat d_{\mathrm{data}}\)
		& \(\hat d_{\mathrm{model}}\) & (s.e.) \\
		\midrule
		S\&P 500   & 0.399 & 0.413 & (0.013) \\
		FTSE 100   & 0.406 & 0.391 & (0.012) \\
		DAX        & 0.426 & 0.380 & (0.011) \\
		Nikkei 225 & 0.378 & 0.375 & (0.013) \\
		KOSPI      & 0.434 & 0.389 & (0.012) \\
		Bitcoin    & 0.350 & 0.305 & (0.011) \\
		\bottomrule
	\end{tabular}
\end{table}

Third, we compare the autocorrelation functions of squared returns.
Figure~\ref{fig:acf} presents the empirical autocorrelation for the
Nikkei~225 together with those implied by the fitted LM-GARCH and
GARCH\((1,1)\) models.  Both model-based autocorrelations are obtained
by simulation at the estimated parameters.

The fitted GARCH\((1,1)\) model is highly persistent, with
\(\hat\alpha+\hat\beta\approx0.98\), but its autocorrelation eventually
decays geometrically and falls below the empirical curve.  The
LM-GARCH model more closely reproduces the slow decay of the empirical
autocorrelation, although both models somewhat overstate persistence at
intermediate lags.

\begin{figure}[tbp]
	\centering
	\includegraphics[width=0.75\linewidth]{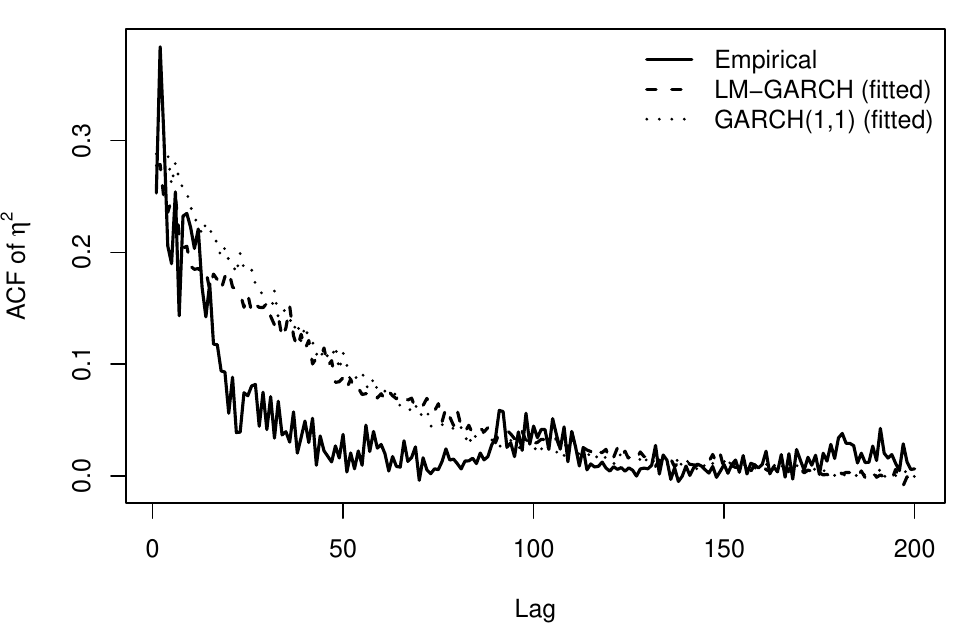}
	\caption{Autocorrelation functions of squared returns for the
		Nikkei~225 through lag \(200\): empirical, implied by the fitted
		LM-GARCH model, and implied by the fitted GARCH\((1,1)\) model.
		The model-based autocorrelations are computed from simulations
		at the estimated parameters.}
	\label{fig:acf}
\end{figure}

The stability factors and model-implied memory estimates are broadly
consistent.  Bitcoin has both the lowest stability factor and the
lowest model-implied memory estimate, whereas the equity indices lie
closer to the diagnostic boundary and generally exhibit stronger
model-implied persistence.  In unreported robustness checks,
re-estimation with \(p=2.0\) leaves the main persistence diagnostics
and cross-market conclusions qualitatively unchanged.

\subsection{In-Sample Comparison}
\label{sec:insample}

Table~\ref{tab:base_competitors} compares the in-sample
log-likelihood of the LM-GARCH model with those of
GARCH, GJR-GARCH, EGARCH, and FIGARCH.  For comparability,
all models are evaluated under the same Gaussian log-likelihood
convention using their fitted conditional variance paths.

The closest short-memory benchmark is GARCH\((1,1)\), which, like the
LM-GARCH specification considered here, is symmetric and has three
estimated variance parameters.  Their in-sample likelihoods are similar.
GARCH\((1,1)\) has a higher likelihood for four of the six markets,
although the differences are small, ranging from \(1.5\) to \(7.8\)
log-likelihood points.  LM-GARCH performs slightly better for the
FTSE~100 and Bitcoin.  This comparison indicates that the long-memory
kernel provides little systematic improvement in one-step in-sample fit
over a highly persistent short-memory specification.

GJR-GARCH and EGARCH attain substantially higher likelihoods for the
five equity indices.  These asymmetric models can capture leverage
effects that are not included in the symmetric LM-GARCH specification.
For Bitcoin, however, LM-GARCH has the highest likelihood among the
nonfractional models.

FIGARCH provides the strongest long-memory benchmark.  It has a higher
in-sample likelihood for all six markets, with differences ranging from
\(13.0\) to \(24.6\) log-likelihood points.  The two models differ,
however, in their representation of persistence: LM-GARCH uses a
two-dimensional Markov state, whereas FIGARCH relies on a fractionally
integrated infinite-order ARCH structure.
The out-of-sample comparison in Section~\ref{sec:oos} shows that
the likelihood advantage of FIGARCH translates into statistically
significant forecast improvements only for KOSPI and Bitcoin.

\begin{table}[tbp]
	\centering
	\caption{In-sample Gaussian log-likelihood comparison.  The first
		column reports the LM-GARCH log-likelihood, and the remaining
		columns report
		\(\Delta=\log L_{\mathrm{competitor}}
		-\log L_{\mathrm{LM\text{-}GARCH}}\).
		Positive values favor the competing model.  All models are
		evaluated using the same Gaussian convention applied to their
		fitted conditional variance paths.}
	\label{tab:base_competitors}
	\begin{tabular}{lrrrrr}
		\toprule
		& LM-GARCH & \multicolumn{4}{c}{\(\Delta\) relative to LM-GARCH} \\
		\cmidrule(lr){2-2}
		\cmidrule(lr){3-6}
		Asset & \(\log L\) & GARCH & GJR & EGARCH & FIGARCH \\
		\midrule
		S\&P 500   & 19600.5 &  1.5 &  96.5 & 108.6 & 13.1 \\
		FTSE 100   & 19456.4 & $-0.5$ & 101.7 & 104.2 & 13.0 \\
		DAX        & 18947.8 &  5.7 &  78.6 &  85.4 & 17.4 \\
		Nikkei 225 & 18550.6 &  7.8 &  23.4 &  50.3 & 14.4 \\
		KOSPI      & 18934.9 &  2.7 &  11.9 &  26.8 & 21.8 \\
		Bitcoin    & 13644.5 & $-3.9$ &  $-3.6$ &  $-2.3$ & 24.6 \\
		\bottomrule
	\end{tabular}
\end{table}

\subsection{Out-of-Sample Forecasting}
\label{sec:oos}

We reserve the final \(30\%\) of each sample for out-of-sample
evaluation.  One-step-ahead forecasts use an expanding estimation
window.  Model parameters are re-estimated every \(250\) trading days,
while model states and predictors are updated daily as new observations
become available.  Forecasts are evaluated against realized variance
using the QLIKE loss, which is robust to measurement error in volatility
proxies under standard conditions \citep{patton2011}.

Tables~\ref{tab:base_oos_qlike} and~\ref{tab:base_oos_dm}
report the average QLIKE losses and the corresponding
Diebold--Mariano tests, respectively.  We compare LM-GARCH with three
benchmarks: GARCH\((1,1)\), FIGARCH, and HAR-RV
\citep{corsi2009}.  GARCH\((1,1)\) is a short-memory return-based
model, FIGARCH is a long-memory return-based model, and HAR-RV
forecasts realized variance directly from its own lags.

The Diebold--Mariano tests \citep{dieboldmariano1995} use the loss
differential
\[
L_{t,\mathrm{LM\text{-}GARCH}}
-
L_{t,\mathrm{benchmark}},
\]
so a negative statistic favors LM-GARCH.  The long-run variance is
estimated using a Newey--West HAC estimator with Bartlett weights \citep{NeweyWest1987}
and
bandwidth \(\lfloor n^{1/3}\rfloor\).  Reported \(p\)-values are based
on the standard normal approximation.

\begin{table}[t]
	\centering
	\small
	\caption{Out-of-sample QLIKE losses at \(p=1.2\).
		Entries are average losses over the final \(30\%\) of each
		sample; lower values indicate greater forecast accuracy.}
	\label{tab:base_oos_qlike}
	\begin{tabular}{lrrrr}
		\toprule
		Asset
		& LM-GARCH
		& GARCH
		& HAR-RV
		& FIGARCH \\
		\midrule
		S\&P 500   & 0.3725 & 0.3766 & 0.2448 & 0.3659 \\
		FTSE 100   & 0.2809 & 0.2809 & 0.2532 & 0.2726 \\
		DAX        & 0.2205 & 0.2210 & 0.1751 & 0.2189 \\
		Nikkei 225 & 0.3737 & 0.3689 & 0.2925 & 0.3624 \\
		KOSPI      & 0.2081 & 0.2131 & 0.1408 & 0.1731 \\
		Bitcoin    & 0.3361 & 0.3358 & 0.2804 & 0.3121 \\
		\bottomrule
	\end{tabular}
\end{table}

\begin{table}[t]
	\centering
	\small
	\caption{Diebold--Mariano tests of out-of-sample forecast accuracy.
		Each statistic compares LM-GARCH with the indicated benchmark,
		using the loss differential
		\(L_{t,\mathrm{LM\text{-}GARCH}}
		-L_{t,\mathrm{benchmark}}\).
		Negative values favor LM-GARCH.
		Stars denote significance:
		\(^{*}p<0.10\), \(^{**}p<0.05\), and \(^{***}p<0.01\).}
	\label{tab:base_oos_dm}
	\begin{tabular}{lrrr}
		\toprule
		Asset
		& vs.\ GARCH
		& vs.\ HAR-RV
		& vs.\ FIGARCH \\
		\midrule
		S\&P 500
		& \(-1.29\)\dmsig{}
		& \( 8.17\)\dmsig{***}
		& \( 0.52\)\dmsig{} \\
		FTSE 100
		& \( 0.00\)\dmsig{}
		& \( 2.62\)\dmsig{***}
		& \( 1.53\)\dmsig{} \\
		DAX
		& \(-0.34\)\dmsig{}
		& \( 4.94\)\dmsig{***}
		& \( 0.46\)\dmsig{} \\
		Nikkei 225
		& \( 1.13\)\dmsig{}
		& \( 3.50\)\dmsig{***}
		& \( 1.49\)\dmsig{} \\
		KOSPI
		& \(-2.10\)\dmsig{**}
		& \( 5.69\)\dmsig{***}
		& \( 7.15\)\dmsig{***} \\
		Bitcoin
		& \( 0.14\)\dmsig{}
		& \( 2.49\)\dmsig{**}
		& \( 5.62\)\dmsig{***} \\
		\bottomrule
	\end{tabular}
\end{table}

LM-GARCH and GARCH\((1,1)\) have similar average QLIKE losses.  LM-GARCH
has a significantly lower loss only for KOSPI.  The differences are not
significant for the other five markets.

FIGARCH has a lower average loss than LM-GARCH in all six markets.  The
difference is significant for KOSPI and Bitcoin and is not significant
for the other four markets.  The results do not indicate a general
out-of-sample advantage of LM-GARCH over the return-based benchmarks.

HAR-RV has the lowest average QLIKE loss for all six markets, and the
differences relative to LM-GARCH are statistically significant.
HAR-RV is fitted directly to realized variance and uses lagged realized
variance as predictors, whereas LM-GARCH is estimated solely from
returns.  It is therefore a specialized realized-variance benchmark
rather than a directly comparable model of the conditional return
distribution.

\subsection{Discussion}
\label{sec:emp_discussion}

The empirical results yield two main conclusions.  First, the
finite-dimensional Markov model generates substantial long-memory
behavior in log-squared returns.  The model-implied memory estimates
closely match the data-based estimates for the Nikkei~225, S\&P~500,
and FTSE~100, although they are lower for the DAX, KOSPI, and Bitcoin.

Second, LM-GARCH provides out-of-sample forecast accuracy broadly
comparable to that of the return-based benchmarks.  Its QLIKE losses are similar to those of
GARCH\((1,1)\) in most markets, while FIGARCH has lower average losses
but statistically significant improvements only for KOSPI and Bitcoin.

Overall, the results support the proposed persistence mechanism while
also showing that it does not account for all of the measured memory in
every market.  Substantial volatility persistence and competitive
return-based forecasting performance are obtained within a
low-dimensional Markov structure.

\section{Conclusion}
\label{sec:conclusion}

This paper proposed a finite-dimensional Markovian volatility model
based on a latent power-law kernel.  After each squared innovation, the
kernel is updated by matching its level and slope.  The resulting
two-dimensional state process consists of a variance-level coordinate
and an endogenous memory-scale coordinate.

We established positive Harris recurrence under a joint
Foster--Lyapunov stability condition.  The proof combines the drift
condition with control-chain arguments establishing irreducibility and
petite-set regularity.  The state process therefore admits a unique
invariant probability measure.  Simulations based on local Whittle
estimation show that the model can generate substantial low-frequency
persistence in log-squared innovations, particularly near the
diagnostic stability boundary.

The empirical results support this mechanism.  The fitted model
generates strong long-memory behavior and closely matches the estimated
memory for several markets.  
Empirically, FIGARCH provides a better in-sample fit across the six markets considered.  The out-of-sample evidence is less clear-cut, as FIGARCH has a lower average QLIKE loss in every market but its advantage is statistically significant only for KOSPI and Bitcoin.  LM-GARCH also performs similarly to GARCH((1,1)) in most markets.

Thus, LM-GARCH provides competitive one-step forecast performance while
representing persistent volatility dynamics with a two-dimensional
Markov state rather than an infinite-order volatility representation.
Unlike deterministic-coefficient \(\mathrm{ARCH}(\infty)\) and
fractionally integrated models, its decay rates are random and
endogenous, while the state recursion remains low-dimensional and
computationally tractable.  The model thus provides a parsimonious
framework in which long-memory behavior, stability analysis, likelihood
estimation, and forecasting can be studied within a common Markovian
structure.

\section*{Funding}
This work was supported by the National Research Foundation of Korea (NRF)
grant funded by the Korea government (MSIT) (No.\ RS-2026-25469087).

\bibliography{Bib}
\bibliographystyle{chicago}

\begin{appendices}

\section{Algebraic rearrangement for the \(c\)-drift}
\label{app:c-drift-algebra}

For completeness, we record the algebraic rearrangement used in the proof of Lemma~\ref{lem:c-envelope}. 
Fix \((X_0,c_0)=(x,c)\), let \(z=\eta_1^2=(\mu+x)U\), and write \(r=r(c)\). 
From Eqs.~\eqref{eq:X_update} and \eqref{eq:Y_update},
\[
c_1
=
\frac{c(xr^p+\xi z)}
{xr^{p+1}+(\xi c/\gamma)z}.
\]
Therefore
\[
c_1-c
=
\frac{
	cxr^p(1-r)-\xi c(c-\gamma)z/\gamma
}
{
	xr^{p+1}+(\xi c/\gamma)z
}.
\]
Set
\[
A:=cxr^p(1-r),
\qquad
B:=\frac{\xi c(c-\gamma)}{\gamma},
\qquad
C:=xr^{p+1},
\qquad
D:=\frac{\xi c}{\gamma}.
\]
Then
\[
c_1-c=\frac{A-Bz}{C+Dz}.
\]
Using the identity
\[
\frac{A-Bz}{C+Dz}
=
-\frac{B}{D}
+
\frac{AD+BC}{D^2}\frac{D}{C+Dz},
\]
we compute
\[
\frac{B}{D}=c-\gamma
\]
and
\[
AD+BC
=
\frac{\xi cxr^p}{\gamma}
\{c(1-r)+r(c-\gamma)\}
=
\frac{\xi cxr^p}{\gamma}(c-\gamma r).
\]
Hence
\[
\frac{AD+BC}{D^2}
=
\frac{\gamma xr^p(c-\gamma r)}{\xi c}.
\]
Define
\[
k:=\frac{C}{D}
=
\frac{\gamma xr^{p+1}}{\xi c}.
\]
Since
\[
\frac{\gamma xr^p(c-\gamma r)}{\xi c}
=
\frac{k}{r}(c-\gamma r),
\]
we obtain
\[
c_1-c
=
-(c-\gamma)
+
\frac{c-\gamma r}{r}
\frac{k}{k+z}.
\]

\section{Auxiliary calculations for regularity}\label{app:regularity-details}

\subsection{Proof of Lemma~\ref{lem:attractor}}
\label{app:attractor-proof}

\begin{proof}
	Fix \((x_0,c_0)\in\mathcal S\) and \(\varepsilon\in(0,1)\). We shall
	construct \(N\in\mathbb N\) and controls
	\((u_1,\dots,u_N)\in O_U^N\) such that the corresponding controlled state
	\((X_N,c_N)\) satisfies
	\[
	X_N<\varepsilon,
	\qquad
	|c_N-\gamma|<\varepsilon.
	\]
	
	We first reduce to the case \(x_0>0\). If \(x_0=0\), then for any
	positive control \(u_1\in O_U\),
	\[
	X_1=\xi\mu u_1>0,
	\qquad
	c_1=\gamma.
	\]
	Since \(O_U\) contains positive controls arbitrarily close to zero, such
	a preliminary control can be chosen arbitrarily small. After this first
	step, the state has strictly positive \(X\)-coordinate. Relabelling the
	resulting state as the initial state, it therefore suffices to prove the
	claim for \(x_0>0\).
	
	Assume henceforth that \(x_0>0\). The construction has two phases.
	
	\medskip\noindent
	\textbf{Phase~A: Decay of \(X\) via small controls.}
	First consider the formal zero-control path \(u_k=0\). Since \(x_0>0\),
	this path is well-defined and satisfies
	\[
	X_{k+1}=X_k r(c_k)^p,
	\qquad
	c_{k+1}=c_k+\tau.
	\]
	Hence
	\[
	c_k^{(0)}=c_0+k\tau
	\]
	and
	\[
	X_k^{(0)}
	=
	x_0\prod_{j=0}^{k-1}r(c_j^{(0)})^p
	=
	x_0
	\prod_{j=0}^{k-1}
	\left(\frac{c_0+j\tau}{c_0+(j+1)\tau}\right)^p
	=
	x_0\left(\frac{c_0}{c_0+k\tau}\right)^p
	\xrightarrow[k\to\infty]{}0.
	\]
	Therefore, for any \(\eta\in(0,1)\), we can choose
	\(N_A\in\mathbb N\) such that the formal zero-control path satisfies
	\[
	X_{N_A}^{(0)}<\eta,
	\qquad
	c_{N_A}^{(0)}>1/\eta.
	\]
	For fixed \(N_A\) and \(x_0>0\), the \(N_A\)-step controlled state is
	continuous in \((u_1,\dots,u_{N_A})\) in a neighborhood of
	\(\mathbf 0\). Since \(0\in\overline{O_U}\), we may choose
	\((u_1,\dots,u_{N_A})\in O_U^{N_A}\) sufficiently close to zero so that
	the actual controlled state satisfies
	\begin{equation}
		X_{N_A}<\eta,
		\qquad
		c_{N_A}>1/\eta.
		\label{eq:phaseA-bounds}
	\end{equation}
	
	\medskip\noindent
	\textbf{Phase~B: One corrective control.}
	Starting from a state \((x,c)\) satisfying \(x<\eta\) and \(c>1/\eta\),
	apply a single positive control \(u_{N_A+1}=\delta\in O_U\). Since
	\(O_U\) contains positive controls arbitrarily close to zero, \(\delta\)
	may be chosen as small as needed. Writing \(r=r(c)\), the updated state is
	\[
	X'
	=
	xr^p+\xi(\mu+x)\delta,
	\]
	and
	\[
	c'
	=
	\frac{c\{xr^p+\xi(\mu+x)\delta\}}
	{xr^{p+1}+(\xi c/\gamma)(\mu+x)\delta}.
	\]
	Set
	\[
	A:=xr^p,
	\qquad
	B:=\xi(\mu+x)\delta.
	\]
	Then
	\[
	X'=A+B,
	\qquad
	c'=\frac{c(A+B)}{rA+(c/\gamma)B}.
	\]
	Hence
	\[
	\frac{c'}{\gamma}-1
	=
	\frac{(c-\gamma r)A}{\gamma rA+cB}.
	\]
	Since \(c\ge\gamma\), \(r\in(0,1)\), and \(B=\xi(\mu+x)\delta\), we have
	\[
	0\le
	\frac{c'}{\gamma}-1
	\le
	\frac{cA}{cB}
	=
	\frac{xr^p}{\xi(\mu+x)\delta}
	\le
	\frac{x}{\xi\mu\delta}.
	\]
	Therefore,
	\begin{equation}
		|c'-\gamma|
		\le
		\frac{\gamma x}{\xi\mu\delta}.
		\label{eq:phaseB-c}
	\end{equation}
	Moreover, since \(r^p\le1\),
	\begin{equation}
		X'
		\le
		x+\xi(\mu+x)\delta.
		\label{eq:phaseB-X}
	\end{equation}
	
	\medskip\noindent
	\textbf{Parameter selection.}
	Choose \(\delta\in O_U\) sufficiently small that
	\[
	\xi\mu\delta<\frac{\varepsilon}{2}.
	\]
	Then choose \(\eta>0\) sufficiently small that
	\[
	\eta(1+\xi\delta)
	<
	\frac{\varepsilon}{2}-\xi\mu\delta
	\qquad\text{and}\qquad
	\frac{\gamma\eta}{\xi\mu\delta}<\varepsilon.
	\]
	By Phase~A, there exist \(N_A\in\mathbb N\) and
	\((u_1,\dots,u_{N_A})\in O_U^{N_A}\) such that
	\[
	X_{N_A}<\eta,
	\qquad
	c_{N_A}>1/\eta.
	\]
	Applying the corrective control \(u_{N_A+1}=\delta\), inequalities
	\eqref{eq:phaseB-c}--\eqref{eq:phaseB-X} give
	\[
	|c_{N_A+1}-\gamma|
	<
	\varepsilon
	\]
	and
	\[
	X_{N_A+1}
	\le
	X_{N_A}+\xi(\mu+X_{N_A})\delta
	<
	\eta(1+\xi\delta)+\xi\mu\delta
	<
	\varepsilon.
	\]
	Thus \((X_{N_A+1},c_{N_A+1})\) lies in the
	\(\varepsilon\)-neighborhood of \((0,\gamma)\). Since
	\(\varepsilon\in(0,1)\) and \((x_0,c_0)\in\mathcal S\) were arbitrary,
	\[
	(0,\gamma)\in\overline{A_+((x_0,c_0))}.
	\]
	Therefore \((0,\gamma)\) is a global attracting state.
\end{proof}

\subsection{Derivation of the Jacobian determinants}
\label{app:det-rank}
Throughout this subsection, we write \((X_k,c_k)\) for the state after
\(k\) steps under the control model \eqref{eq:cm-X}--\eqref{eq:cm-Y}.
For \(x_0>0\) and \(c_0\ge\gamma\), define the two-step endpoint map
\[
H_{x_0,c_0}(u_1,u_2) := \Phi^{(2)}((x_0,c_0);u_1,u_2)
= (X_2(u_1,u_2),c_2(u_1,u_2)).
\]
Its control-to-state Jacobian is
\[
\mathbf J(x_0,c_0;u_1,u_2)
:=
D_{(u_1,u_2)}H_{x_0,c_0}(u_1,u_2).
\]
We write
\[
\mathbf J_0(x_0,c_0)
:=
\mathbf J(x_0,c_0;0,0).
\]

\begin{lemma}[Two-step Jacobian determinant]
	\label{lem:two-step-jacobian}
	Let \(x_0>0\), \(c_0\ge\gamma\), and
	\[
	r_0:=r(c_0)=\frac{c_0}{c_0+\tau}.
	\]
	Then
	\[
	\det \mathbf J_0(x_0,c_0)
	=
	\frac{\xi^{2}(\mu+x_0)(\mu+x_0 r_0^{p})\,\tau}
	{x_0 r_0^{p}\gamma^{2}}
	\mathcal Q(c_0),
	\]
	where
	\[
	\mathcal Q(c)
	:=
	\gamma(\gamma-2c-3\tau)
	+
	p(\gamma-c-\tau)(\gamma-c-2\tau).
	\]
	In particular,
	\[
	\det \mathbf J_0(x_0,c_0)\ne0
	\quad\text{if and only if}\quad
	\mathcal Q(c_0)\ne0.
	\]
\end{lemma}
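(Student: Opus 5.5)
We compute the Jacobian directly at $u_1=u_2=0$, working with $(X,Y)$ where $Y:=X/c$ rather than $(X,c)$. In these coordinates the recursion \eqref{eq:cm-X}--\eqref{eq:cm-Y} is linear in the control:
\[
X_{k+1}=X_k r(c_k)^p+\xi(\mu+X_k)u_{k+1},\qquad
Y_{k+1}=Y_k r(c_k)^{p+1}+\tfrac{\xi}{\gamma}(\mu+X_k)u_{k+1},
\]
with $c_k=X_k/Y_k$. The Jacobian of $(X_2,c_2)$ with respect to $(u_1,u_2)$ equals $D(X,c)/D(X,Y)$ times the Jacobian of $(X_2,Y_2)$. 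The first factor is triangular with determinant $\partial c/\partial Y=-X_2/Y_2^2=-c_2^2/X_2$. We then evaluate everything along the zero-control path. There $c_1=c_0+\tau$, $c_2=c_0+2\tau$, $X_1=x_0r_0^p$ and $X_2=x_0r_0^p r_1^p$, where $r_1=r(c_0+\tau)$.

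The derivatives with respect to $u_2$ are immediate. Only the second step depends on $u_2$, and
\[
\partial_{u_2}(X_2,Y_2)=\xi(\mu+X_1)\,(1,1/\gamma).
\]
For the derivatives with respect to $u_1$ at zero control we use the chain rule through the state $(X_1,Y_1)$. First,
\[
\partial_{u_1}(X_1,Y_1)=\xi(\mu+x_0)\,(1,1/\gamma).
\]
Next, the one-step zero-control map is $G(X,Y)=\bigl(X r(X/Y)^p,\;Y r(X/Y)^{p+1}\bigr)$, and we differentiate it at $(X_1,Y_1)$. Using $r(c)=c/(c+\tau)$ and $r'(c)=\tau/(c+\tau)^2$, we have $\partial_X c=1/Y$ and $\partial_Y c=-c/Y$. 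Hence
\[
DG=\begin{pmatrix} r^p+p r^{p-1}r' c & -p r^{p-1} r' c^2\\[2pt] (p+1)r^p r' & r^{p+1}-(p+1)r^p r' c\end{pmatrix},
\]
evaluated at $c=c_1$. This gives $\partial_{u_1}(X_2,Y_2)=DG\cdot \xi(\mu+x_0)(1,1/\gamma)^\top$.

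The determinant of the $2\times2$ matrix with columns $DG\,v$ and $w$ is then a bilinear expression, with $v\propto w\propto(1,1/\gamma)$. Writing $\det[DG\,v,\,v]$ with $v=(1,1/\gamma)$, the factor $\xi^2(\mu+x_0)(\mu+x_0r_0^p)$ comes out immediately. We multiply by $-c_2^2/X_2$ and substitute $r_1=(c_0+\tau)/(c_0+2\tau)$ and $r'(c_1)=\tau/(c_0+2\tau)^2$. All powers $r_1^p$ should cancel against $X_2=x_0r_0^pr_1^p$, and the factor $(c_0+2\tau)^2$ against $c_2^2$. What remains is $\tau/(x_0r_0^p\gamma^2)$ times a quadratic polynomial in $c_0$, which we must identify with $\mathcal Q(c_0)$. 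Checking the sign convention is part of this step; in particular, $\mathcal Q(c)=\gamma(\gamma-2c-3\tau)+p(\gamma-c-\tau)(\gamma-c-2\tau)$, so the terms in $p$ must collect into $(c_1-\gamma)(c_2-\gamma)$.

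The main obstacle is this final algebraic simplification: keeping track of the powers of $r_1$ and the factors $(c_0+\tau)$ and $(c_0+2\tau)$ so that they cancel exactly, and verifying the stated sign and grouping. We would organize it by writing $DG$ as $r_1^p$ times a matrix depending rationally on $c_1$, and expanding $\det[DG\,v,v]=(DG\,v)_1/\gamma-(DG\,v)_2$. The final equivalence is immediate: $x_0>0$, $\mu>0$, $\xi>0$ and $\tau>0$ make the prefactor nonzero, so $\det\mathbf J_0(x_0,c_0)\neq0$ if and only if $\mathcal Q(c_0)\neq0$.
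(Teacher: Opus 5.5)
Your proposal is correct, and it takes a genuinely different route from the paper.

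\textbf{The paper's route.} The paper stays in the $(X,c)$ coordinates throughout. It computes the control sensitivities $A_k=\partial X_{k+1}/\partial u_{k+1}$ and $B_k=\partial c_{k+1}/\partial u_{k+1}$ at zero control; the $B_k$ come from the quotient rule, e.g.\ $B_0=c_0A_0\rho_0/(x_0r_0^{p+2})$. It then uses the fact that the zero-control update $c\mapsto c+\tau$ does not depend on $X$. This makes the state Jacobian of the second step upper triangular, with rows $\bigl(r_1^p,\ pX_1r_1^{p-1}\tau/(c_1+\tau)^2\bigr)$ and $(0,\ 1)$. The determinant is then a three-term expression, simplified by hand.

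\textbf{Your route.} You pass to $(X,Y)$ with $Y=X/c$. There the control enters affinely along the fixed direction $v=(1,1/\gamma)^\top$. The two columns are therefore $\xi(\mu+x_0)\,DG\,v$ and $\xi(\mu+X_1)\,v$, and the determinant factors as $-\frac{c_2^2}{X_2}\,\xi^2(\mu+x_0)(\mu+X_1)\det[DG\,v,v]$. The price is a full, non-triangular $DG$ plus the coordinate-change factor.

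\textbf{The step you left unfinished closes.} With $r=r_1$, $r'=\tau/c_2^2$ and $c=c_1$, one gets
\[
\det[DG\,v,v]=\frac{\tau\,r_1^{p-1}c_1}{\gamma^2c_2^3}\bigl[p(\gamma-c_1)(c_2-\gamma)+\gamma(c_1+c_2-\gamma)\bigr]=-\frac{\tau\,r_1^{p-1}c_1}{\gamma^2c_2^3}\,\mathcal Q(c_0).
\]
So the two minus signs cancel, and the $p$-terms group as you anticipated. One bookkeeping correction: the denominator carries $c_2^3$, not $c_2^2$. After cancelling $c_2^2$, the leftover $c_1/c_2=r_1$ combines with $r_1^{p-1}$ to give the $r_1^p$ that cancels against $X_2=x_0r_0^pr_1^p$.

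\textbf{What each approach buys.} Your formulation explains the prefactor transparently. It also gives a clean geometric criterion: $DG$ is invertible (the zero-control step has Jacobian determinant $r_1^p\neq0$ in $(X,c)$), so rank fails exactly when $v$ is an eigenvector of the zero-control linearization at $c_1$. The paper's version keeps everything in the model's natural coordinates and has a triangular step Jacobian. In exchange, it must compute the nontrivial control derivatives $B_k$ of the $c$-update.
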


\begin{proof}

Abbreviate
\[
r_0 := r(c_0) = \frac{c_0}{c_0+\tau}, \quad
r_1 := r(c_1) = \frac{c_1}{c_1+\tau},
\]
where $c_1 := c_0+\tau$ and $X_1 := x_0\,r_0^{p}$ are the states at
$\mathbf{u}=\mathbf{0}$, and
\[
A_0 := \xi(\mu+x_0), \quad
A_1 := \xi(\mu+X_1), \quad
\rho_0 := r_0 - \tfrac{c_0}{\gamma}, \quad
\rho_1 := r_1 - \tfrac{c_1}{\gamma}.
\]
Recall from \eqref{eq:X_update}--\eqref{eq:Y_update} that
\begin{equation}
	X_k = X_{k-1}\,r(c_{k-1})^{p}+\xi(\mu+X_{k-1})\,u_k,
	\quad
	c_k
	= \frac{c_{k-1}\bigl[X_{k-1}\,r(c_{k-1})^{p}+\xi(\mu+X_{k-1})\,u_k\bigr]}
	{X_{k-1}\,r(c_{k-1})^{p+1}+(\xi c_{k-1}/\gamma)(\mu+X_{k-1})\,u_k}.
	\label{eq:appendix-Psi}
\end{equation}
Note the key identity: $\left. c_k \right|_{u_k=0} = c_{k-1}+\tau$
independently of $X_{k-1}$.

\paragraph{Step~1: First-step partials at $u_1=0$.}
Differentiating \eqref{eq:appendix-Psi} with respect to $u_1$ at
$\left. \cdot \right|_{\mathbf{0}}$ and base point $(X_0,c_0)=(x_0,c_0)$:
\[
\left.\frac{\partial X_1}{\partial u_1}\right|_{\mathbf{0}} = A_0,
\qquad
\left.\frac{\partial c_1}{\partial u_1}\right|_{\mathbf{0}}
= \frac{c_0 A_0\,\rho_0}{x_0 r_0^{p+2}}
=: B_0.
\]
At $\mathbf{u}=\mathbf{0}$ the post-step state is $(X_1,c_1)=(x_0 r_0^p,\,c_0+\tau)$.

\paragraph{Step~2: Second-step partials at $u_2=0$.}
Differentiating \eqref{eq:appendix-Psi} with respect to $u_2$ at
$\left. \cdot \right|_{\mathbf{0}}$ and base point $(X_1,c_1)$:
\[
\left.\frac{\partial X_2}{\partial u_2}\right|_{\mathbf{0}} = A_1,
\qquad
\left.\frac{\partial c_2}{\partial u_2}\right|_{\mathbf{0}}
= \frac{c_1 A_1\rho_1}{X_1 r_1^{p+2}}
=: B_1.
\]
Since $\left. c_k \right|_{u_k=0}=c_{k-1}+\tau$ is independent of $X_{k-1}$,
\[
\left.\frac{\partial c_2}{\partial X_1}\right|_{\mathbf{0}} = 0,
\qquad
\left.\frac{\partial c_2}{\partial c_1}\right|_{\mathbf{0}} = 1,
\]
and
\[
\left.\frac{\partial X_2}{\partial X_1}\right|_{\mathbf{0}} = r_1^{p},
\qquad
\left.\frac{\partial X_2}{\partial c_1}\right|_{\mathbf{0}}
= \frac{p\,X_1\,r_1^{p-1}\,\tau}{(c_1+\tau)^{2}}.
\]

\paragraph{Step~3: Chain rule for the first column.}
By the chain rule,
\[
\left.\frac{\partial X_2}{\partial u_1}\right|_{\mathbf{0}}
= r_1^{p}\,A_0 + \frac{p\,X_1\,r_1^{p-1}\,\tau}{(c_1+\tau)^{2}}\,B_0,
\qquad
\left.\frac{\partial c_2}{\partial u_1}\right|_{\mathbf{0}} = B_0.
\]

\paragraph{Step~4: Jacobian determinant.}
\begin{align}
	\det \mathbf J_0 (x_0, c_0)
	&= \det\begin{pmatrix}
		r_1^{p}A_0 + \dfrac{p\,X_1 r_1^{p-1}\tau\,B_0}{(c_1+\tau)^{2}} & A_1 \\[6pt]
		B_0 & B_1
	\end{pmatrix} \notag\\[4pt]
	&= r_1^{p}A_0 B_1
	+ \frac{p\,X_1 r_1^{p-1}\tau\,B_0 B_1}{(c_1+\tau)^{2}}
	- A_1 B_0.
	\label{eq:det-three-terms}
\end{align}

\paragraph{Step~5: Closed forms for $\rho_0,\rho_1$.}
Using $r_0=c_0/c_1$ and $r_1=c_1/(c_1+\tau)$,
\[
\rho_0 = \frac{r_0\,g_0}{\gamma}, \quad
\rho_1 = \frac{r_1\,g_1}{\gamma}, \quad
g_0 := \gamma-c_0-\tau, \quad g_1 := \gamma-c_0-2\tau.
\]

\paragraph{Step~6: Simplification.}
Substituting the expressions for $B_0$, $B_1$ from Steps~1--2 and
$\rho_0$, $\rho_1$ from Step~5 into \eqref{eq:det-three-terms},
each term evaluates as follows. Using $X_1 = x_0 r_0^p$ and $c_0 = r_0 c_1$:
\begin{align*}
	r_1^p A_0 B_1
	&= \frac{c_1 A_0 A_1 g_1}{\gamma x_0 r_0^p r_1}, \\
	A_1 B_0
	&= \frac{c_1 A_0 A_1 g_0}{\gamma x_0 r_0^p}, \\
	\frac{p X_1 r_1^{p-1}\tau\, B_0 B_1}{(c_1+\tau)^2}
	&= \frac{p\tau A_0 A_1 g_0 g_1}{\gamma^2 x_0 r_0^p},
\end{align*}
where the last equality uses $(c_1+\tau)^2 = c_1^2/r_1^2$.
Combining and factoring out $A_0 A_1/(\gamma^2 x_0 r_0^p)$,
\[
\det \mathbf J_0(x_0, c_0) 
= \frac{A_0 A_1}{\gamma^2 x_0 r_0^p}
\Bigl[\gamma c_1\Bigl(\frac{g_1}{r_1} - g_0\Bigr) + p\tau\, g_0 g_1\Bigr].
\]
Since $c_1/r_1 = c_1+\tau$ and $g_0 = \gamma-c_1$, $g_1=\gamma-c_1-\tau$,
we have
\[
c_1\Bigl(\frac{g_1}{r_1} - g_0\Bigr)
= (c_1+\tau)(\gamma-c_1-\tau) - c_1(\gamma-c_1)
= \tau(\gamma - 2c_1 - \tau),
\]
giving
\[
\det \mathbf J_0 (x_0, c_0)  = \frac{\tau A_0 A_1}{\gamma^2 x_0 r_0^p}
\bigl[\gamma(\gamma - 2c_1 - \tau) + p\, g_0 g_1\bigr].
\]

\paragraph{Step~7: Final formula.}
Substituting \(A_0=\xi(\mu+x_0)\), \(A_1=\xi(\mu+x_0r_0^p)\), and
\(c_1=c_0+\tau\), we obtain
\[
\det \mathbf J_0(x_0,c_0)
=
\frac{\xi^{2}(\mu+x_0)(\mu+x_0 r_0^{p})\,\tau}
{x_0 r_0^{p}\gamma^{2}}
\mathcal Q(c_0).
\]
Since the prefactor is strictly positive for \(x_0>0\), the final
equivalence follows.

\end{proof}

\begin{proof}[Proof of Lemma~\ref{lem:anchor-rank}]
	Fix any \(a_1>0\).  Starting from \((0,\gamma)\), this control gives
	\[
	\Phi((0,\gamma);a_1)=(\xi\mu a_1,\gamma),
	\]
	whose first coordinate is strictly positive.  Moreover,
	\[
	\Phi^{(3)}((0,\gamma);a_1;u_2,u_3)
	=
	\Phi^{(2)}((\xi\mu a_1,\gamma);u_2,u_3),
	\]
	and therefore
	\[
	\mathbf J_3((0,\gamma);a_1;u_2,u_3)
	=
	\mathbf J(\xi\mu a_1,\gamma;u_2,u_3).
	\]
	
	By Lemma~\ref{lem:two-step-jacobian}, for every \(x_0>0\) and
	\(c_0\ge\gamma\),
	\[
	\det \mathbf J_0(x_0,c_0)\ne0
	\quad\Longleftrightarrow\quad
	\mathcal Q(c_0)\ne0.
	\]
	
	If \(\mathcal Q(\gamma)\ne0\), then
	\[
	\det \mathbf J_0(\xi\mu a_1,\gamma)\ne0.
	\]
	By continuity, we may choose \(a_2,a_3>0\), arbitrarily close to zero,
	such that
	\[
	\det \mathbf J_3((0,\gamma);a_1;a_2,a_3)\ne0.
	\]
	For \(a_2,a_3\) sufficiently close to zero, the endpoint
	\[
	\Phi^{(3)}((0,\gamma);a_1;a_2,a_3)
	\]
	lies in \(\mathcal S^\circ\).  Thus the first alternative holds.
	
	Suppose now that \(\mathcal Q(\gamma)=0\).  Under this condition,
	\[
	\mathcal Q(\gamma+\tau)=2\gamma(\gamma+2\tau)>0.
	\]
	For the formal zero second control, starting from
	\((\xi\mu a_1,\gamma)\), the next \(c\)-coordinate is \(\gamma+\tau\).
	By continuity, we may therefore choose \(a_2>0\), sufficiently small,
	such that
	\[
	\Phi^{(2)}((0,\gamma);a_1,a_2)=(x_2,c_2),
	\qquad
	x_2>0,
	\qquad
	\mathcal Q(c_2)\ne0.
	\]
	For this state,
	\[
	\Phi^{(4)}((0,\gamma);a_1,a_2;u_3,u_4)
	=
	\Phi^{(2)}((x_2,c_2);u_3,u_4),
	\]
	and therefore
	\[
	\mathbf J_4((0,\gamma);a_1,a_2;u_3,u_4)
	=
	\mathbf J(x_2,c_2;u_3,u_4).
	\]
	By Lemma~\ref{lem:two-step-jacobian},
	\[
	\det \mathbf J_0(x_2,c_2)\ne0.
	\]
	By continuity, we may choose \(a_3,a_4>0\), arbitrarily close to zero,
	such that
	\[
	\det \mathbf J_4((0,\gamma);a_1,a_2;a_3,a_4)\ne0.
	\]
	For \(a_3,a_4\) sufficiently close to zero, the endpoint
	\[
	\Phi^{(4)}((0,\gamma);a_1,a_2;a_3,a_4)
	\]
	lies in \(\mathcal S^\circ\).  Thus the second alternative holds.
\end{proof}

\subsection{Markov-chain regularity and recurrence}
\label{app:markov-background}
\begin{definition}[\(\psi\)-irreducibility]
	\label{def:psi-irreducibility}
	Let \(\{\mathbf X_n\}\) be a Markov chain on
	\((\mathcal S,\mathcal B(\mathcal S))\).  The chain is
	\(\phi\)-irreducible if there exists a nontrivial \(\sigma\)-finite
	measure \(\phi\) such that
	\[
	\sum_{n\ge1}P^n(\mathbf x,A)>0
	\]
	for every \(\mathbf x\in\mathcal S\) and every
	\(A\in\mathcal B(\mathcal S)\) with \(\phi(A)>0\).  Among all
	irreducibility measures there is a maximal measure \(\psi\), unique up
	to equivalence; the chain is then called \(\psi\)-irreducible.
\end{definition}

\begin{definition}[Small and petite sets]
	\label{def:small-petite}
	A set \(C\in\mathcal B(\mathcal S)\) is small if there exist
	\(m\ge1\), a nontrivial measure \(\nu\), and a constant \(\epsilon>0\)
	such that
	\[
	P^m(\mathbf x,A)
	\ge
	\epsilon \nu(A),
	\qquad
	\mathbf x\in C,\quad A\in\mathcal B(\mathcal S).
	\]
	The set \(C\) is petite if there exist a probability mass function
	\(a\) on \(\mathbb Z_+\) and a nontrivial measure \(\nu\) such that
	\[
	K_a(\mathbf x,A)
	:=
	\sum_{n=0}^{\infty}a(n)P^n(\mathbf x,A)
	\ge
	\nu(A),
	\qquad
	\mathbf x\in C,\quad A\in\mathcal B(\mathcal S).
	\]
\end{definition}

\begin{definition}[Harris recurrence and positive Harris recurrence]
	\label{def:harris-recurrent}
	Let
	\[
	P(\mathbf x,A)
	:=
	\mathbb P(\mathbf X_1\in A\mid \mathbf X_0=\mathbf x)
	\]
	be the one-step transition kernel, and let
	\(\mathbb P_{\mathbf x}\) denote the law of the chain started from
	\(\mathbf x\).  A \(\psi\)-irreducible chain is Harris recurrent if
	\[
	\mathbb P_{\mathbf x}(\tau_A<\infty)=1
	\]
	for every \(\mathbf x\in\mathcal S\) and every
	\(A\in\mathcal B(\mathcal S)\) with \(\psi(A)>0\), where
	\[
	\tau_A:=\inf\{n\ge1:\mathbf X_n\in A\}.
	\]
	It is positive Harris recurrent if it is Harris recurrent and admits
	an invariant probability measure.  For a \(\psi\)-irreducible positive
	Harris recurrent chain, this invariant probability measure is unique.
\end{definition}

\begin{theorem}[Foster--Lyapunov criterion]
	\label{thm:FL}
	Let \(\{\mathbf X_n\}\) be a \(\psi\)-irreducible Markov chain.
	Suppose there exist a petite set \(C\), a function
	\(V:\mathcal S\to[1,\infty)\) bounded on \(C\), and constants
	\(\delta>0\), \(b<\infty\) such that
	\[
	\mathbb E_{\mathbf x}[V(\mathbf X_1)]-V(\mathbf x)
	\le
	-\delta+b\mathbf 1_C(\mathbf x),
	\qquad
	\mathbf x\in\mathcal S.
	\]
	Then \(\{\mathbf X_n\}\) is positive Harris recurrent and admits a
	unique invariant probability measure.
\end{theorem}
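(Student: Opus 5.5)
This is the standard Foster--Lyapunov criterion from Meyn and Tweedie, so I will prove it by assembling their general results rather than reproving them from scratch.

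\emph{Step 1: Harris recurrence.} The drift inequality gives $PV\le V-\delta+b\mathbf 1_C\le V+b\mathbf 1_C$. Since $V\ge1$ is finite everywhere, this is condition (V1). I will use the standard consequence of (V1) with petite $C$: the chain starting from any $\mathbf x$ reaches $C$ almost surely. To see this, consider the stopped process $V(\mathbf X_{n\wedge\tau_C})+\delta(n\wedge\tau_C)$. It is a nonnegative supermartingale, because off $C$ the drift is at most $-\delta$. Optional stopping then gives
\[
\delta\,\mathbb E_{\mathbf x}[\tau_C]\le V(\mathbf x)+b,
\]
so $\tau_C<\infty$ almost surely. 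By $\psi$-irreducibility and petiteness of $C$, the chain has a uniform lower bound $K_a(\mathbf y,A)\ge\nu(A)$ for $\mathbf y\in C$. Each return to $C$ therefore gives a fixed positive probability of later entering any $A$ with $\psi(A)>0$. A standard geometric-trials argument (Meyn--Tweedie, Prop.~9.1.7 / Thm.~9.1.8) then yields $\mathbb P_{\mathbf x}(\tau_A<\infty)=1$ for every $\mathbf x$, which is Harris recurrence.

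\emph{Step 2: positivity.} Iterating the drift inequality and summing gives the Comparison Theorem bound
\[
\delta\sum_{k=0}^{n-1}P^k(\mathbf x,\mathcal S)\le V(\mathbf x)+b\sum_{k=0}^{n-1}P^k(\mathbf x,C).
\]
Dividing by $n$ gives
\[
\liminf_n\frac1n\sum_{k<n}P^k(\mathbf x,C)\ge\delta/b>0.
\]
So $C$ is visited with positive asymptotic frequency. Combined with the first bound in Step~1, we get $\sup_{\mathbf y\in C}\mathbb E_{\mathbf y}[\tau_C]\le \sup_C V/\delta+b/\delta<\infty$, using that $V$ is bounded on $C$. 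A petite set with uniformly bounded mean return times gives an invariant probability measure through the standard regenerative (Nummelin splitting) construction: build an atom from the minorization on $C$, and use the mean return time to the atom, which is finite by the geometric-trials bound. The resulting finite invariant measure is normalized to a probability. This is Meyn--Tweedie Thm.~11.3.4 together with Thm.~10.4.10.

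\emph{Step 3: uniqueness.} For a $\psi$-irreducible recurrent chain the invariant measure is unique up to scaling (Thm.~10.4.4), so the invariant probability is unique.

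The main obstacle is Step~2's construction of the invariant measure from a petite (rather than small) set. This requires passing to the split chain or the resolvent chain $K_a$, and I would cite Meyn--Tweedie rather than reproduce it.
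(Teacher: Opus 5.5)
Your proposal is correct and follows essentially the same route as the paper, which rescales $V$ to $V/\delta$ to obtain the normalized drift $P(V/\delta)-V/\delta\le -1+(b/\delta)\mathbf 1_C$ and then cites Meyn--Tweedie Theorem~11.3.4 directly, with uniqueness coming from $\psi$-irreducibility; you instead unpack what that theorem does (the hitting-time bound, geometric trials for Harris recurrence, bounded mean return times for positivity) while citing the same chapters. One small point: when $\mathbf x\in C$, your stopped process is a supermartingale only from time $1$ onward, and this is exactly why your bound reads $\delta\,\mathbb E_{\mathbf x}[\tau_C]\le V(\mathbf x)+b$.
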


\begin{remark}
	Theorem~\ref{thm:FL} is the drift-form consequence of
	\citet[Theorem~11.3.4]{meyn2009markov}.  Indeed, after replacing
	\(V\) by \(V/\delta\), the drift inequality becomes
	\[
	P(V/\delta)(\mathbf x)-V(\mathbf x)/\delta
	\le
	-1+\frac{b}{\delta}\mathbf 1_C(\mathbf x).
	\]
	Since \(C\) is petite and \(V\) is bounded on \(C\), the theorem implies
	positive Harris recurrence.  Uniqueness of the invariant probability
	measure follows from \(\psi\)-irreducibility.
\end{remark}

\end{appendices}
\end{document}